\newcommand{\bTWObesc}{{\sc b2pbesc}\xspace}
\renewcommand{\dots}{...}
\newcommand{\NP}{$\mathcal{NP}$\xspace}
\renewcommand{\NP}{$\mathcal{NP}$}
\newcommand{\NPC}{\mbox{\NP-complete}\xspace}
\newcommand{\NPHN}{\mbox{\NP-hardness}\xspace}
\definecolor{blue}{rgb}{0.274,0.392,0.666}
\definecolor{red}{rgb}{1,0.3,0.3}
\definecolor{green}{rgb}{0,0.588,0.509}
\newcommand{\red}[1]{{{#1\xspace}}}
\newcommand{\blue}[1]{{{#1\xspace}}}
\newcommand{\bookinstance}[1]{$G^{#1}({V_1^{#1}} \cup {V_2^{#1}}, E^{#1})$\xspace}
\newcommand{\Gint}{$G_{\cap}$\xspace}
\newcommand{\Gr}[1]{$\red{G^{#1}_1}$\xspace}
\newcommand{\Gb}[1]{$\blue{G^{#1}_2}$\xspace}
\newcommand{\tlp}{{\sc {\em $T$}-Level Planarity}\xspace}
\newcommand{\sefeinstance}[1]{$\langle\red{G^{#1}_1},\blue{G^{#1}_2}\rangle$\xspace}
\newcommand{\sefe}{{\sc sefe}\xspace}
\newcommand{\tuba}{$^\dag$} \newcommand{\rome}{$^{\diamond}$} \newcommand{\kit}{$^{\circ}$}
\newcommand{\cpllong}{Clustered Planarity with Linear Saturators\xspace}
\newcommand{\cpl}{{\sc cpls}\xspace}
\title{Intersection-Link Representations of Graphs}
\date{}
\author{Patrizio {Angelini\tuba}, Giordano {Da Lozzo\rome}, Giuseppe {Di Battista\rome}, \\
    Fabrizio {Frati\rome}, Maurizio {Patrignani\rome}, Ignaz {Rutter\kit}
	\institute{
    \tuba T\"ubingen University, Germany\\
    \email{angelini@informatik.uni-tuebingen.de}\\
	\rome~Roma Tre University, Italy\\
    \email{\{dalozzo,gdb,frati,patrigna\}@dia.uniroma3.it}\\
    \kit~Karlsruhe Institute of Technology, Germany\\
    \email{rutter@kit.edu}
}}
\newcommand{\remove}[1]{}
\let\doendproof\endproof
\renewcommand{\endproof}{\qed\doendproof}
\begin{document}
\pagestyle{plain}
\maketitle

\begin{abstract}
We consider drawings of graphs that contain dense subgraphs. We introduce \emph{intersection-link representations} for such graphs, in which each vertex $u$ is represented by a geometric object $R(u)$ and in which each edge $(u,v)$ is represented by the intersection between $R(u)$ and $R(v)$ if it belongs to a dense subgraph or by a curve connecting the boundaries of $R(u)$ and $R(v)$ otherwise. We study a notion of planarity, called {\sc Clique Planarity}, for intersection-link representations of graphs in which the dense subgraphs are cliques.
\end{abstract}

\section{Introduction}

In several applications there is the need to represent graphs that are globally
sparse but contain dense subgraphs. As an example, a social network is
often composed of communities, whose members are closely interlinked,
connected by a network of relationships that are much less dense.
The visualization of such networks poses challenges that are attracting the study of several researchers (see, e.g.,~\cite{brw-wnv-01,hb-vosn-05}). One frequent approach is to rely on clustering techniques to collapse dense subgraphs and then
represent only the links between clusters. However, this
has the drawback of hiding part of the graph structure. Another approach that has been explored is the use of hybrid drawing standards, where
different conventions are used to represent the dense and the sparse portions of
the graph: In the drawing standard introduced
in~\cite{bbdlpp-valg-11,hfm-dhvsn-07} each dense part is represented by an
adjacency matrix while two adjacent dense parts are connected by a curve.

In this paper we study \emph{intersection-link representations}, which are hybrid representations where in the dense parts of the graph the edges are represented by the intersection of geometric objects  ({\em intersection} representation)  and in the sparse parts the edges are represented by curves  ({\em link} representation). 

More formally and more specifically, we introduce the following problem. Suppose that a pair $(G,S)$ is given where $G$ is a graph and $S$ is a set of cliques that partition the vertex set of $G$. In an \emph{intersection-link} representation, vertices are represented by geometric objects that are translates of the same rectangle. Consider an edge $(u,v)$ and let $R(u)$ and $R(v)$ be
the rectangles representing $u$ and $v$, respectively. If $(u,v)$ is part of a
clique (\emph{intersection-edge}) we represent it by drawing $R(u)$ and $R(v)$
so that they intersect, else (\emph{link-edge}) we represent it by a curve
connecting $R(u)$ and $R(v)$. An example is
provided in Fig.~\ref{fig:model}. 

\begin{figure}
	\centering
		\includegraphics[page=1,width=0.35\textwidth]{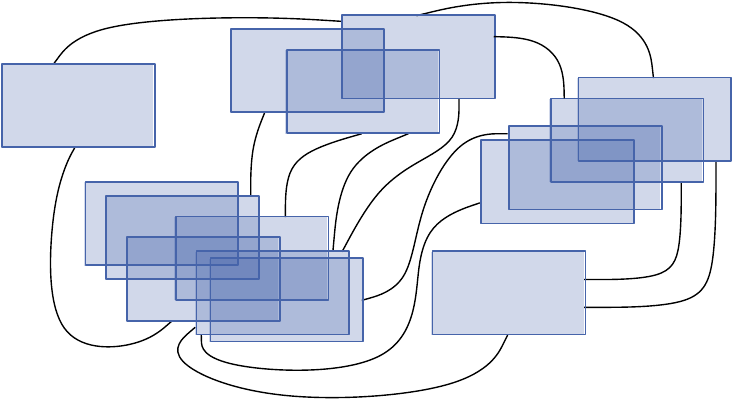}
	\caption{Intersection-link representation of a graph with five
cliques.}\label{fig:model}
\end{figure}

We study the {\sc Clique Planarity} problem
that asks to test whether a pair $(G,S)$ has an intersection-link representation such that link-edges do not cross each other and do not traverse any rectangle. The main challenge of the problem lies in the interplay between the geometric constraints imposed by the rectangle arrangements and the topological constraints imposed by the link edges.

Several problems are related to {\sc Clique Planarity}; here we mention two notable ones. The problem of recognizing intersection graphs of translates of the same rectangle is \NPC~\cite{f-sslg-09}. Note that this does not imply \NPHN for our problem, since cliques always have such a representation. {\em Map graphs} allow to represent graphs containing large cliques in a readable way; they are contact graphs of internally-disjoint connected regions of the plane, where the contact can be even a single point. The recognition of map graphs has been studied in~\cite{cgp-mg-02,t-mgp-98}. One can argue that there are graphs that admit a clique-planar representation, while not admitting any representation as a map graph, and vice versa.

We now describe our contribution. Our study encountered several interesting and at a first glance unrelated theoretical problems.  In more detail, our results are as follows.

\begin{compactitem}
\item In Section~\ref{se:hardness} we show that {\sc Clique Planarity} is \NPC even if $S$ contains just one clique with more than one vertex. This result is established by observing a relationship between {\sc Clique Planarity} and a natural constrained version of the {\sc Clustered Planarity} problem, in which we ask whether a path (rather than a tree as in the usual {\sc Clustered Planarity} problem) can be added to each cluster to make it connected while preserving clustered planarity; we prove this problem to be \NPC, a result which might be interesting in its own right.
\item In Section~\ref{se:fixed-representations}, we show how to decide {\sc Clique Planarity} in linear time in the case in which each clique has a prescribed geometric representation, via a reduction to the problem of testing planarity for a graph with a given partial representation.
\item In Section~\ref{se:2-cliques}, we concentrate on instances of {\sc Clique Planarity} composed of two cliques. While we are unable to settle the complexity of this case, we show that the problem becomes equivalent to an interesting variant of the {\sc 2-Page Book Embedding} problem, in which the graph is bipartite and the vertex ordering in the book embedding has to respect the vertex partition of the graph. This problem is in our opinion worthy of future research efforts. For now, we use this equivalence to establish a polynomial-time algorithm for the case in which the link-edges are assigned to the pages of the book embedding.
\item In Section~\ref{se:clique-hierarchy}, we study a Sugiyama-style problem where the cliques are arranged on levels according to a hierarchy. In this practical setting we show that {\sc Clique Planarity} is solvable in polynomial time. This is achieved via a reduction to the {\sc $T$-level planarity} problem~\cite{tibp-addfr-15}.
\end{compactitem}

Conclusions and open problems are presented in Section~\ref{se:conclusions}.


\section{Intersection-link model} \label{se:preliminaries}

Let $G$ be a graph and $S$ be a set of cliques inducing a partition of the vertex set of $G$. In an \emph{intersection-link representation} of $(G,S)$: 

\begin{itemize}
\item each vertex $u$ is a geometric object $R(u)$, which is a translate of an axis-aligned rectangle $\cal R$; 
\item two rectangles $R(u)$ and $R(v)$ intersect if and only if edge $(u,v)$ is an intersection-edge, that is, if and only if $(u,v)$ belongs to a clique in $S$; and 
\item if $(u,v)$ is a link-edge, then it is represented by a curve connecting the boundaries of $R(u)$ and $R(v)$. 
\end{itemize}

To avoid degenerate intersections we assume that no two rectangles have their sides on the same horizontal or vertical line. The {\sc Clique Planarity} problem asks whether an intersection-link representation of a pair $(G,S)$ exists such that: 

\begin{enumerate}
\item no two curves intersect; and 
\item no curve intersects the interior of a rectangle. 
\end{enumerate}

Such a representation is called {\em clique-planar}. A pair $(G,S)$ is {\em clique-planar} if it admits a clique-planar representation.


We now present two simple, yet important, combinatorial properties of intersection-link representations. Let $\Gamma$ be an intersection-link representation of $(K_n,\{K_n\})$ and let $B$ be the outer boundary of $\Gamma$. We have the following.

\begin{lemma} \label{le:two-squares}
Traversing $B$ clockwise, the sequence of encountered rectangles is not of the form $\dots, R(u), \dots, R(v),\\ \dots, R(u), \dots, R(v)$, for any $u,v\in G$.
\end{lemma}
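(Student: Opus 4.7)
The plan is to exploit the Helly property of axis-parallel rectangles together with a radial sweep, reducing the lemma to the elementary fact that two distinct translates of the same axis-aligned rectangle cross in only two boundary points. Since all rectangles of $\Gamma$ belong to the clique $K_n$, every pair intersects, and the non-degeneracy assumption (no two rectangles share a horizontal or vertical line) guarantees that each pairwise intersection has nonempty interior. Axis-aligned rectangles have Helly number $2$, so applying Helly's theorem coordinate-wise yields a point $p^{*}$ lying in the open interior of every rectangle $R(w)$ of $\Gamma$.

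Because each $R(w)$ is convex and contains $p^{*}$, the union $\bigcup_w R(w)$ is star-shaped with respect to $p^{*}$, and thus so is the region bounded by $B$. Consequently, each ray $r_\theta$ emanating from $p^{*}$ at angle $\theta$ meets $B$ in a unique point $q(\theta)$. Let $d_w(\theta)$ denote the distance from $p^{*}$ to $\partial R(w)$ along $r_\theta$, and define $L(\theta)$ as the rectangle $R(w)$ for which $d_w(\theta)$ is maximal. As $\theta$ decreases over one period, $q(\theta)$ traces $B$ clockwise exactly once, so the cyclic sequence of labels $L(\theta)$ coincides with the cyclic sequence of rectangles encountered along $B$.

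Suppose now, for contradiction, that $B$ exhibits the forbidden pattern, giving four angles $\theta_1<\theta_2<\theta_3<\theta_4$ in $[0,2\pi)$ with labels $R(u),R(v),R(u),R(v)$. Then the continuous function $f(\theta)=d_u(\theta)-d_v(\theta)$ is strictly positive at $\theta_1,\theta_3$ and strictly negative at $\theta_2,\theta_4$, so by the intermediate value theorem, applied cyclically on $[0,2\pi)$, $f$ has at least four zeros. On the other hand, $f(\theta)=0$ exactly when $r_\theta$ passes through a point of $\partial R(u)\cap\partial R(v)$; a brief case analysis pitting the four sides of $R(u)$ against the four sides of $R(v)$ shows that, for two distinct intersecting translates of the same axis-aligned rectangle in general position, the set $\partial R(u)\cap\partial R(v)$ consists of exactly two points (the two ``free'' corners of the intersection rectangle). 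Thus $f$ has at most two zeros, contradicting the previous count and proving the claim.

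The main obstacle I expect is technical rather than structural: ensuring the radial sweep is unambiguous at every relevant angle, since if $r_\theta$ grazes a rectangle corner then $L(\theta)$ need not be unique. This is handled by choosing $p^{*}$ generically inside $\bigcap_w \mathrm{int}(R(w))$ so that the finitely many witnessing rays $r_{\theta_i}$ hit $B$ at regular points of the appropriate rectangle boundaries; equivalently, each $\theta_i$ can be slightly perturbed within the maximal arc of $B$ on which its label is constant.
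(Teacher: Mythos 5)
Your proof is correct, but it follows a genuinely different route from the paper's. The paper argues locally on the pair $\{R(u),R(v)\}$: since two intersecting translates in general position cross in exactly two boundary points, the boundary of $R(u)\cup R(v)$ splits into one arc of $\partial R(u)$ and one arc of $\partial R(v)$, and the occurrences of $R(u)$ and $R(v)$ along $B$ inherit this non-interleaved cyclic structure (the order-consistency between $B$ and $\partial(R(u)\cup R(v))$ is left implicit). You instead build a global radial parametrization: Helly for boxes (valid here, and your use of the non-degeneracy assumption to get a point in the \emph{open} common interior is correct) gives a star center $p^{*}$, the boundary $B$ becomes the graph of $\rho(\theta)=\max_w d_w(\theta)$, and the forbidden pattern forces the continuous function $d_u-d_v$ to change sign four times while it can vanish only at the two rays through $\partial R(u)\cap\partial R(v)$. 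Both arguments ultimately rest on the same two-crossing fact, but yours trades the paper's one-line topological appeal for an explicit sign-counting argument that is more self-contained and, as a bonus, the radial sweep gives a clean global cyclic parametrization of $B$ that would also streamline the proof of Lemma~\ref{le:squares-arrangment}. The only points needing the care you already flag are (i) choosing the witness angles in the interiors of the maximal portions, away from the two crossing points, so that the sign of $d_u-d_v$ is strict there, and (ii) noting that distinct points of $\partial R(u)\cap\partial R(v)$ lie on distinct rays from $p^{*}$ (immediate, since a ray from an interior point meets the boundary of a convex body once), so the zero count is genuinely at most two.
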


\begin{proof}
The proof is based on the fact that the outer boundary of the union of $R(u)$ and $R(v)$ consists of two maximal portions, one belonging to $R(u)$ and one to $R(v)$.
\end{proof}

\begin{lemma} \label{le:squares-arrangment}
Traversing $B$ clockwise, the sequence of encountered rectangles is a subsequence of $R(u_1),R(u_2),$ $\dots,R(u_n),R(u_{n-1}),\dots,R(u_2)$, for some permutation $u_1,\dots,u_n$ of the vertices of $K_n$.
\end{lemma}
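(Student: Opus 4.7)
The plan is to exploit the Helly property of axis-aligned rectangles, which gives the $n$ rectangles a common interior point, and then to read the cyclic boundary sequence off the Pareto structure of the rectangles in each quadrant around that point.

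First, since every two of the $n$ rectangles intersect and axis-aligned boxes in the plane satisfy the Helly property (pairwise intersection of the horizontal and the vertical projections yields a common point in each coordinate by $1$-dimensional Helly, hence a common point of all rectangles), the $n$ rectangles share a common interior point $q$, which I translate to the origin. I write $R(u) = [a_u, a_u+w] \times [b_u, b_u+h]$; each $a_u$ then lies in $(-w,0)$, and by the assumption that no two rectangles have sides on a common vertical line the values $a_u$ are pairwise distinct. I order the vertices as $u_1, \ldots, u_n$ so that $a_{u_1} < \cdots < a_{u_n}$ and claim this is the permutation for which the statement holds.

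Next I describe the boundary $B$ geometrically. Restricted to the first quadrant, the union of the rectangles equals the union of the corner-at-origin boxes $[0, a_u+w] \times [0, b_u+h]$, so the portion of $B$ in that quadrant is the upper envelope of these boxes: a descending staircase through exactly the rectangles that are Pareto-maximal in $(a_u+w, b_u+h)$, visited in order of increasing $a_u$. Analogous staircases appear in the other three quadrants. The four corner rectangles $u_T, u_R, u_B, u_L$---those maximizing $b_u+h$, $a_u+w$, $-b_u$, $-a_u$, respectively---are precisely the rectangles Pareto-maximal in two adjacent quadrants, and at each of the four corresponding axis points their two incident staircase arcs fuse into a single arc of $B$. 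The crucial observation is that, travelling clockwise from $u_L$ through the upper-left and then the upper-right staircase to $u_R$, the value $a_u$ of the visited rectangles is strictly increasing: monotone within each staircase by Pareto-maximality, and continuing across the junction at $u_T$ because $u_T$ is the maximum-$a$ element of the upper-left staircase and the minimum-$a$ element of the upper-right one. Symmetrically, on the lower half of $B$ from $u_R$ through $u_B$ back to $u_L$, the value $a_u$ is strictly decreasing.

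Since $u_L = u_1$ and $u_R = u_n$, starting the clockwise traversal at $u_L$ the upper half of $B$ lists its rectangles as a subsequence of $R(u_1), R(u_2), \ldots, R(u_n)$ (ending at $u_R = u_n$), and the lower half as a subsequence of $R(u_n), R(u_{n-1}), \ldots, R(u_2)$; splicing the two gives the required subsequence of the palindrome $R(u_1), R(u_2), \ldots, R(u_n), R(u_{n-1}), \ldots, R(u_2)$. The main obstacle I anticipate is careful bookkeeping at the four corner rectangles and at rectangles that are Pareto-maximal in two diagonally opposite quadrants: corner rectangles must be shown to contribute a single connected arc to $B$ (so they appear once, matching the single occurrences of $u_1$ and $u_n$ in the palindrome), whereas a diagonally doubly-maximal rectangle contributes two separate arcs, one on each monotone half of $B$, whose positions must lie on opposite sides of the palindrome. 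Lemma~\ref{le:two-squares} provides a convenient cross-check, since any indexing error would manifest as a forbidden $uvuv$ alternation on $B$.
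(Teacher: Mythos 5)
Your proof is correct, but it follows a genuinely different route from the paper's. The paper argues purely combinatorially: it shows (via three claims about corners of the rectangles and maximal portions of $B$) that no rectangle contributes more than two maximal arcs to $B$, and then combines this multiplicity bound with Lemma~\ref{le:two-squares} (no $R(u),\dots,R(v),\dots,R(u),\dots,R(v)$ alternation) to conclude that the cyclic sequence has the stated palindromic supersequence --- leaving implicit the standard fact that an alternation-free circular sequence with each symbol occurring at most twice is exactly a subsequence of $u_1,\dots,u_n,u_{n-1},\dots,u_2$, and in particular never exhibiting the permutation. Your argument instead produces the permutation explicitly: Helly for axis-parallel boxes gives a common interior point (the intersection $[\max a_u,\min a_u+w]\times[\max b_u,\min b_u+h]$ is a nondegenerate box precisely because the rectangles are congruent, pairwise intersecting, and in general position), the union is star-shaped about that point so $B$ is a single closed curve decomposing into four Pareto staircases, and the left-to-right order of the left sides is strictly monotone along each half of $B$ --- with the junction analysis being sound, since Pareto-maximality in two \emph{adjacent} quadrants does force a rectangle to be the corresponding extreme rectangle, whence its two staircase arcs fuse, while opposite-quadrant double maximality yields exactly the rectangles occurring once in each half of the palindrome. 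What your approach buys is self-containment (Lemma~\ref{le:two-squares} becomes a corollary rather than an ingredient), an explicit witness permutation (the $x$-order, which is essentially the order the paper later uses when placing upper-left corners along a line of slope $1$), and a precise characterization of which rectangles appear twice; what it costs is the quadrant-junction bookkeeping you correctly flag, and it leans on the translate assumption in a different place (the coincidence of the $a_u$- and $(a_u+w)$-orders) than the paper does (its Claim B). Both are valid; just make sure the final write-up includes the interiority of the Helly point and the adjacent-quadrant fusion argument, as these are where the sketch currently does the most work per sentence.
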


\begin{proof}
We prove a sequence of claims.

(Claim A): Every maximal portion of $B$ belonging to a single rectangle $R(u)$ contains (at least) one corner of $R(u)$. Namely, if part of a side of $R(u)$ belongs to $B$, while its corners do not, then two distinct rectangles $R(v)$ and $R(z)$ enclose those corners. However, this implies that $R(v)$ and $R(z)$ do not intersect, a contradiction to the fact that $\Gamma$ is a representation of $(K_n,\{K_n\})$.

(Claim B): If two adjacent corners of the same rectangle $R(u)$ both belong to $B$, then the entire side of $R(u)$ between them belongs to $B$. Namely, if a rectangle $R(v)\neq R(u)$ intersects a side of $R(u)$, then at least one of the two corners of that side lies in the interior of $R(v)$, given that $R(u)$ and $R(v)$ are translates of the same rectangle; hence that corner does not belong to $B$.

(Claim C): Any rectangle $R(u)$ does not define three distinct maximal portions of $B$. Suppose the contrary, for a contradiction. By Claim A, each maximal portion of $B$ belonging to $R(u)$ contains a corner of $R(u)$. This implies the existence of two adjacent corners belonging to two distinct maximal portions of $B$. However, by Claim B the side of $R(u)$ between those corners belongs to $B$, hence those corners belong to the same maximal portion of $B$, a contradiction.

Claim~C and Lemma~\ref{le:two-squares} imply the statement of the lemma.
\end{proof}

The following lemma allows us to focus, without loss of generality, on special clique-planar representations, which we call {\em canonical}.

\begin{lemma} \label{le:canonical-clique-planar}
Let $(G,S)$ admit a clique-planar representation $\Gamma$. There exists a clique-planar representation $\Gamma'$ of $(G,S)$ such that: 

\begin{itemize}
\item each vertex is represented by an axis-aligned unit square; and 
\item for each clique $s \in S$, all the squares representing vertices in $s$ have their upper-left corner along a common line with slope $1$.
\end{itemize}
\end{lemma}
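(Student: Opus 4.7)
My plan is to transform the given clique-planar representation $\Gamma$ in two stages: first normalize the shape of every rectangle to a unit square, and then, clique by clique, replace the rectangle configuration with a canonical staircase of unit squares whose upper-left corners lie on a slope-$1$ line. For the first stage, let $\mathcal{R}$ have width $w$ and height $h$; the linear map $\varphi(x,y)=(x/w,y/h)$ is a homeomorphism of the plane that sends every rectangle of $\Gamma$ to a translate of the axis-aligned unit square and every link-edge curve to a continuous curve. Applying $\varphi$ throughout yields a clique-planar representation $\Gamma_1$ of $(G,S)$ in which every vertex is a unit square, and which preserves all combinatorial properties of $\Gamma$: pairwise rectangle intersections, cyclic order of curve attachments along rectangle boundaries, absence of curve--rectangle crossings, and the non-degeneracy condition on the side-lines.

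For the second stage, fix a clique $s\in S$ with $n_s=|s|$ vertices. Lemma~\ref{le:squares-arrangment}, applied to the sub-representation of $s$ in $\Gamma_1$, gives a permutation $u_1,\ldots,u_{n_s}$ of its vertices such that the cyclic sequence of squares encountered along the outer boundary of $s$'s footprint is a subsequence of the \emph{canonical cyclic pattern} $R(u_1),R(u_2),\ldots,R(u_{n_s}),R(u_{n_s-1}),\ldots,R(u_2)$. Define the \emph{canonical staircase} for $s$ as the arrangement of unit squares whose upper-left corners are the points $(i\epsilon_s,\,c_s+i\epsilon_s)$, for $i=1,\ldots,n_s$, with some offset $c_s$ and some $\epsilon_s\in(0,1/(n_s-1))$ that ensures every pair of squares intersects. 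By construction these corners lie on the slope-$1$ line $y=x+c_s$, and a direct geometric check confirms that the outer boundary of the canonical staircase realizes precisely the full canonical cyclic pattern above (with each middle vertex $u_i$, $1<i<n_s$, contributing two maximal boundary portions and $u_1,u_{n_s}$ each contributing one).

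To assemble the new representation $\Gamma'$, enclose each clique $s$ in a small open topological disk $D_s$ containing its footprint and disjoint from all other cliques' footprints and from every link-edge except for short initial arcs of those incident to $s$. Since each canonical staircase has fixed diameter while the $D_s$'s can be made arbitrarily large by a preliminary global dilation of $\Gamma_1$ (which preserves its planar embedding and all cyclic boundary orders), assume each $D_s$ is large enough to contain a canonical unit-square staircase for $s$. Delete the original arrangement of $s$ from $D_s$, insert the canonical staircase, and reroute each clipped link-edge endpoint on $\partial D_s$ to a point of the appropriate boundary portion of the corresponding square. The main obstacle is to carry out these reroutings pairwise disjointly and without traversing any square interior; this is possible because the cyclic order in which link-edges leave $\partial D_s$ coincides with the cyclic order of their original attachments to the outer boundary of $s$ in $\Gamma_1$, and by Lemma~\ref{le:squares-arrangment} that cyclic order is a subsequence of the canonical cyclic pattern exposed by the staircase. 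Hence a non-crossing matching between clipped endpoints and boundary portions exists inside each $D_s$, and performing this replacement for every clique yields the desired $\Gamma'$.
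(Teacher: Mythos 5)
Your proposal is correct and follows essentially the same route as the paper: cut out each clique's arrangement, replace it by a staircase of pairwise-intersecting unit squares whose upper-left corners lie on a slope-$1$ line, and reroute the link-edges using the fact (Lemma~\ref{le:squares-arrangment}) that their cyclic attachment order is a subsequence of the canonical pattern exposed by the staircase boundary. Your preliminary affine normalization to unit squares is harmless but redundant, since the clique-by-clique replacement already produces unit squares.
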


\begin{proof}
Initialize $\Gamma'=\Gamma$. Rescale $\Gamma'$ in such a way that the unit distance is very small with respect to the size of the rectangles representing vertices in $\Gamma$.

For each clique $s \in S$, consider a closed polyline $P_s$ ``very close'' to the representation of $s$, so that it contains all and only the rectangles representing the vertices of $s$ and it crosses at most once each curve representing a link-edge of $G$. Traverse $P_s$ clockwise. By Lemma~\ref{le:squares-arrangment} and by the clique planarity of $\Gamma$, the circular sequence of encountered curves representing link-edges and crossing $P_s$ contains edges incident to  a subsequence of $R(u_1),R(u_2),\dots,R(u_{|s|}),R(u_{|s|-1}),\dots,R(u_2)$, for some permutation $u_1,\dots,u_{|s|}$ of the vertices of $s$. Remove the interior of $P_s$. Put in the interior of $P_s$ unit squares $Q(u_1),Q(u_2),\dots, Q(u_{|s|})$ representing $u_1,u_2,\dots,u_{|s|}$ as required by the lemma and such that they all share a common point of the plane. Reroute the curves representing link-edges from the border of $P_s$ to the suitable ending squares. This can be done without introducing any crossings, because the circular sequence of the squares encountered when traversing the boundary of the square arrangement clockwise is $Q(u_1),Q(u_2),\dots,Q(u_{|s|}),Q(u_{|s|-1}),\dots,Q(u_2)$.
\end{proof}



\section{Hardness Results on Clique Planarity} \label{se:hardness}

In this section we prove that the {\sc Clique Planarity} problem is not solvable in polynomial time, unless ${\cal P}$=\NP. In fact, we have the following.

\begin{theorem}\label{th:np-one-clique}
It is \NPC to decide whether a pair $(G,S)$ is clique-planar, even if $S$ contains just one clique with more than one vertex.
\end{theorem}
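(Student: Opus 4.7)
The plan is to establish the theorem in two steps: membership in \NP and \NPHN via a reduction from an auxiliary problem on clustered graphs. Membership is almost immediate from the combinatorial characterization developed in Section~\ref{se:preliminaries}: by Lemmas~\ref{le:canonical-clique-planar} and~\ref{le:squares-arrangment}, a clique-planar representation is essentially encoded by a linear order of each clique's vertices together with a planar embedding of the graph obtained by replacing each clique with a Hamiltonian path on that order; a certificate of polynomial size can be guessed and verified in polynomial time via a standard planarity test.

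For the hardness direction I would introduce an intermediate problem, \cpllong (\cpl), which asks whether a flat c-planar clustered graph admits, for each cluster, a set of ``saturating'' edges that form a Hamiltonian path on the cluster's vertices while keeping the augmented graph c-planar. The plan is to prove \cpl to be \NPH already when the input has a single non-trivial cluster $C$, and then to reduce this restricted version to \clpshort with a single non-trivial clique.

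The first of these two reductions is what I expect to be the main obstacle. My approach is a reduction from \betp, encoding each ordered triple $(a,b,c)$ by a planar gadget attached to the three corresponding vertices of $C$. The gadget must simultaneously permit every linear order of $C$ consistent with all betweenness constraints and forbid every order that violates any of them, with the additional complication that the two endpoints of the saturator path are not known in advance: a cyclic ``wrap-around'' (e.g., $a$ and $c$ becoming the two endpoints while $b$ sits elsewhere) would geometrically realize the triple while violating the intended betweenness condition. I plan to overcome this by attaching further pendant subgraphs and singleton clusters to $C$ that pin the endpoints of the Hamiltonian path inside prescribed faces of a fixed outer structure, so that cyclic symmetry is broken and the gadgets can be analyzed assuming a genuinely linear arrangement.

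The second reduction is structural and hinges on the geometric lemmas already established. Starting from a \cpl instance with unique non-trivial cluster $C$, I would construct $(G',S)$ where $G'$ is obtained from the input graph by turning $C$ into a clique and $S$ consists of this clique together with singletons on all remaining vertices. By Lemma~\ref{le:squares-arrangment}, the vertices of $C$ occur along the outer boundary of their rectangle arrangement in a palindromic cyclic order, which encodes a linear order $u_1,\ldots,u_{|C|}$ and splits the incident link-edges into two sides separated by $u_1$ and $u_{|C|}$; this is exactly the data of a c-planar embedding of the path-saturated cluster. Conversely, Lemma~\ref{le:canonical-clique-planar} turns any c-planar drawing together with a Hamiltonian saturator path into a clique-planar representation of $(G',S)$ by laying the unit squares diagonally along the path, yielding the desired equivalence.
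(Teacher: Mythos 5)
Your overall architecture coincides with the paper's: the same intermediate problem \cpl is introduced, \NP-membership is argued via canonical representations, and the cluster-to-clique reduction you describe in your last paragraph is essentially the paper's Lemma~\ref{le:reduction-cpl} (together with the planarity criterion of Lemma~\ref{lem:cpl-criterion}). That part of your proposal is sound. The problem is that the step you yourself identify as ``the main obstacle''---proving that \cpl with a single non-trivial cluster is \NPH---is exactly the step you do not carry out. You propose a reduction from \betp with per-triple planar gadgets, acknowledge that the gadgets must rule out cyclic wrap-around of the saturator path, and then only state an intention to ``attach further pendant subgraphs and singleton clusters'' to fix this. No gadget is constructed, and it is far from clear that one exists along these lines: planarity of the path-augmented graph constrains which pairs of cluster vertices may be \emph{consecutive} on the saturator path (they must share a face), whereas a betweenness constraint on a triple $(a,b,c)$ is a condition on the \emph{global} relative order of three vertices that need not be anywhere near each other on the path. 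Turning the former local mechanism into the latter global one is the entire difficulty, and the proposal gives no construction, no correctness argument, and no evidence that the two directions (every consistent order realizable, every violating order excluded) can be achieved simultaneously.

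The paper sidesteps this difficulty by choosing a source problem whose structure matches the local mechanism: {\sc Hamiltonian Path} in biconnected planar graphs. Given such a $G$, one fixes a planar embedding, adds a vertex $v_f$ inside each face adjacent to all vertices of $f$ (so the result is a triangulated planar graph), subdivides every original edge of $G$, and places all original vertices in the single non-trivial cluster. Since the resulting graph is a subdivision of a triangulated planar graph, two cluster vertices can be joined by a planarly routable saturator edge if and only if they share a face, which holds if and only if they were adjacent in $G$; hence a linear saturator is precisely a Hamiltonian path of $G$. Here the only combinatorial data the reduction must control is consecutiveness on the path, which is exactly what the planarity of the augmented graph governs, so no order-enforcing gadgets are needed. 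As it stands, your proposal establishes the easy half of the theorem and leaves the hardness of \cpl---and therefore the theorem itself---unproven.
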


We prove Theorem~\ref{th:np-one-clique} by showing a polynomial-time reduction from a constrained clustered planarity problem, which we prove to be \NPC, to the {\sc Clique Planarity} problem.

A {\em clustered graph} $(G,T)$ is a pair such that $G$ is a graph and $T$ is a rooted tree whose leaves are the vertices of $G$; the internal nodes of $T$ distinct from the root correspond to subsets of vertices of $G$, called {\em clusters}.  A clustered graph is {\em flat} if every cluster is a child of the root.  The {\sc clustered planarity} problem asks whether a given clustered graph $(G,T)$ admits a {\em c-planar drawing}, i.e., a planar drawing of $G$, together with a representation of each cluster $\mu$ in $T$ as a simple region $R_\mu$ of the plane such that: (i) every region $R_\mu$ contains all and only the vertices in $\mu$; (ii) every two regions $R_\mu$ and $R_\nu$ are either disjoint or one contains the other; and (iii) every edge intersects the boundary of each region $R_\mu$ at most once.

Polynomial-time algorithms for testing the existence of a c-planar drawing of a clustered graph are known only in special cases, most notably, if it is {\em c-connected}, i.e., each cluster induces a connected subgraph~\cite{cdfpp-cccg-06,fce-pcg-95}. It has long been known~\cite{fce-pcg-95} that a clustered graph $(G,T)$ is c-planar if and only if a set of edges can be added to $G$ so that the resulting graph is c-planar and c-connected. Any such set of edges is called {\em saturator}, and the subset of a saturator composed of those edges between vertices of the same cluster $\mu$ defines a {\em saturator for $\mu$}. A saturator is {\em linear} if the saturator of each cluster is a path.

The {\sc \cpllong} (\cpl) problem takes as input a flat clustered
graph $(G,T)$ such that each cluster in $T$ induces an independent set
of vertices, and asks whether $(G,T)$ admits a linear saturator.

\newcommand{\lemmaCPLcriterion}{
Let $(G,T)$ be an instance of \cpl with $G=(V,E)$ and let $E^\star \subseteq \binom{V}{2} \setminus E$ be such that in $G^\star = (V,E \cup E^\star)$ every cluster induces a path.  Then $E^\star$ is a linear saturator for $(G,T)$ if and only if $G^\star$ is planar.}

\begin{lemma}
\label{lem:cpl-criterion}
\lemmaCPLcriterion
\end{lemma}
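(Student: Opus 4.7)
The plan is to prove each direction of the biconditional separately. The ``only if'' direction is essentially definitional: a c-planar drawing is in particular a planar drawing. The interesting direction is ``if''; here I would rely on the hypothesis that every cluster induces a path in $G^\star$ to convert an arbitrary planar drawing of $G^\star$ into a c-planar drawing of $(G^\star,T)$, by thickening each cluster path into a simply connected region that will serve as the cluster region.

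For the forward direction, if $E^\star$ is a linear saturator for $(G,T)$, then by the definition of saturator the clustered graph $(G^\star,T)$ admits a c-planar drawing. Discarding the cluster regions from any such drawing yields a planar drawing of $G^\star$, so $G^\star$ is planar.

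For the backward direction, I would fix a planar drawing $\Gamma$ of $G^\star$, and for each cluster $\mu$ denote by $P_\mu$ the path $G^\star[\mu]$, which is drawn in $\Gamma$ as an embedded tree. I would then define the cluster region $R_\mu$ to be the open $\varepsilon$-tubular neighborhood of $P_\mu$ in $\Gamma$, for a sufficiently small $\varepsilon>0$. Since the clusters partition $V$, the paths $\{P_\mu\}$ are pairwise vertex-disjoint, and $\Gamma$ has only finitely many geometric features; therefore $\varepsilon$ can be chosen small enough so that: (i) each $R_\mu$ is simply connected and contains exactly the vertices of $\mu$; (ii) distinct regions $R_\mu,R_\nu$ are disjoint; and (iii) each inter-cluster edge of $G^\star$ crosses $\partial R_\mu$ exactly once, close to its endpoint in $\mu$. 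Augmenting $\Gamma$ with these regions thus yields a c-planar drawing of $(G^\star,T)$; since each cluster induces the connected subgraph $P_\mu$, the pair $(G^\star,T)$ is also c-connected, hence $E^\star$ is a linear saturator for $(G,T)$.

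The main obstacle lies in the backward direction: one must argue that the tubular neighborhoods of all cluster paths can be chosen simultaneously and still satisfy all the conditions of c-planarity. The hypothesis that each cluster induces a \emph{path}, rather than an arbitrary subgraph, is crucial here, as it guarantees that $P_\mu$ is drawn as an embedded tree and hence possesses a simply connected tubular neighborhood. Without this connectedness the regions $R_\mu$ could be forced to be non-simply-connected, and one would generally not be able to avoid inter-cluster edges crossing $\partial R_\mu$ more than once.
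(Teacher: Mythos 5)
Your proof is correct and follows essentially the same route as the paper: the forward direction is definitional, and the backward direction realizes each cluster region as a sufficiently narrow (tubular) neighborhood of the path that the cluster induces in $G^\star$, using the fact that clusters are independent sets in $G$ and partition $V$. Your version merely spells out the choice of $\varepsilon$ and the verification of the c-planarity conditions in more detail than the paper does.
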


\begin{proof}
  Clearly, if $E^\star$ is a linear saturator, then $(G^\star,T)$ is
  c-planar and thus $G^*$ is planar.  Conversely, assume that $G^\star$ is
  planar and let $\Gamma^\star$ be a planar drawing.  Since the vertices
  of each cluster are isolated in $G$, the region $R_\mu$ for each
  cluster $\mu$ can be represented by a sufficiently narrow region
  around the corresponding path in $G^\star$ yielding a c-planar
  drawing of $G^\star$.  It follows that $E^\star$ is a linear
  saturator.
\end{proof}

The following lemma connects the problem {\sc Clique Planarity} with the problem {\sc \cpllong}.

\begin{lemma} \label{le:reduction-cpl}
Given an instance $(G,T)$ of the \cpl problem, an equivalent instance $(G',S)$ of the {\sc Clique Planarity} problem can be constructed in linear time.
\end{lemma}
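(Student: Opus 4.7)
The plan is to define $(G', S)$ by setting $V(G') = V(G)$, taking the set $S$ of cliques to be exactly the collection of cluster vertex sets $\{V_\mu : \mu \in T\}$, and putting into $E(G')$ both the original edges of $G$ (which become link-edges, noting that $G$'s clusters are independent sets and so every edge of $G$ is inter-cluster) together with all pairs of vertices inside a common cluster (which become intersection-edges). The construction runs in linear time in the natural encoding, where an instance of {\sc Clique Planarity} is specified by the link-edges and the partition $S$.

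For the ``only if'' direction, suppose $E^\star$ is a linear saturator of $(G,T)$. By Lemma~\ref{lem:cpl-criterion}, the graph $G^\star = (V, E \cup E^\star)$ is planar, and the restriction of $E^\star$ to each cluster $\mu$ is a Hamiltonian path $P_\mu = u_1^\mu u_2^\mu \cdots u_{k_\mu}^\mu$ of $V_\mu$. Starting from a planar embedding of $G^\star$, I would replace each path $P_\mu$ by a canonical square arrangement as in Lemma~\ref{le:canonical-clique-planar}, placing unit squares $R(u_1^\mu),\dots,R(u_{k_\mu}^\mu)$ with upper-left corners on a slope-$1$ line inside a thin neighborhood of $P_\mu$. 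By Lemma~\ref{le:squares-arrangment}, the clockwise boundary of this arrangement meets the vertices of $\mu$ in the palindromic pattern $u_1^\mu,\dots,u_{k_\mu}^\mu,u_{k_\mu-1}^\mu,\dots,u_2^\mu$, which coincides with the cyclic order in which the link-edges of $G$ incident to $V_\mu$ attach around a sufficiently thin tubular neighborhood of $P_\mu$ in the planar embedding of $G^\star$. The link-edges can therefore be rerouted to the appropriate sides of the squares without crossings, producing a clique-planar representation of $(G', S)$.

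For the ``if'' direction, suppose $(G', S)$ is clique-planar; by Lemma~\ref{le:canonical-clique-planar} I may assume the representation is canonical. For each cluster $\mu$, Lemma~\ref{le:squares-arrangment} gives an ordering $u_1^\mu,\dots,u_{k_\mu}^\mu$ of $V_\mu$ such that the boundary of the corresponding square arrangement realises the palindromic pattern above. Set $E^\star = \bigcup_\mu \{u_i^\mu u_{i+1}^\mu : 1 \le i < k_\mu\}$, so that each cluster induces a path in $(V, E \cup E^\star)$. To apply Lemma~\ref{lem:cpl-criterion}, I would exhibit a planar drawing of $G^\star$ by routing, inside each square arrangement, the edge $u_i^\mu u_{i+1}^\mu$ through the intersection region of $R(u_i^\mu)$ and $R(u_{i+1}^\mu)$ and then shrinking the squares to points; the link-edges, being disjoint from the interiors of the rectangles, survive as the edges of $G$ and no new crossings are introduced.

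The main obstacle, I expect, is the bookkeeping in the ``only if'' direction: verifying that the cyclic order of link-edge attachments around a cluster in the canonical square arrangement agrees with the corresponding order in the planar embedding of $G^\star$. This reduces to the elementary topological observation that the boundary of a thin neighbourhood of an embedded path on $k$ vertices traverses each interior vertex twice and each endpoint once, in precisely the palindromic pattern of Lemma~\ref{le:squares-arrangment}; once this is in hand, both directions of the equivalence follow by combining it with Lemma~\ref{lem:cpl-criterion}.
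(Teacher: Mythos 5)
Your proposal is correct and follows essentially the same route as the paper: the identical construction of $(G',S)$, the forward direction via the palindromic boundary order of a thin neighbourhood of each saturator path matched against Lemma~\ref{le:squares-arrangment}, and the reverse direction by reading $E^\star$ off the canonical arrangement and invoking Lemma~\ref{lem:cpl-criterion} after collapsing squares to vertex points. The only (cosmetic) differences are that the paper works with the cluster regions of the c-planar drawing rather than tubular neighbourhoods you build yourself, and it makes the no-crossing claim in the reverse direction concrete with an explicit separating-line argument instead of your ``shrink the squares'' deformation.
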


\begin{proof}
Instance $(G',S)$ is defined as follows. Initialize $G'=G$. For each cluster $\mu \in T$, add edges to $G'$ such that $\mu$ forms a clique and add this clique to $S$. Clearly, instance $(G',S)$ can be constructed in linear time. We prove that $(G,T)$ admits a linear saturator if and only if $(G',S)$ is clique-planar.

Suppose that $(G,T)$ admits a linear saturator. This implies that there exists a c-planar drawing $\Gamma^\star$ of $(G^\star,T)$, where $G^\star$ is obtained by adding the saturator to $G$. We construct a clique-planar representation $\Gamma$ of $(G',S)$ starting from $\Gamma^\star$ as follows.

Consider a cluster $\mu$ of $T$ represented by region $R_\mu$, let $B_{\mu}$ be the boundary of $R_{\mu}$, and let $u_1,\dots, u_k$ be the vertices of $\mu$ ordered as they appear along the saturator for $\mu$. For each edge $(u,v)$ of $G^\star$ crossing $B_{\mu}$, subdivide $(u,v)$ with a dummy vertex at this crossing point.  Note that the order of the vertices of $\mu$, corresponding to the order in which their incident edges cross $B_\mu$, is a subset of $u_1,\dots,u_{k-1},u_k,u_{k-1}, \dots, u_2$.

Remove from $\Gamma^\star$ all the vertices and (part of the) edges
contained in the interior of $R_\mu$.  Represent $u_1,\dots, u_k$ by pairwise-intersecting rectangles $R(u_1), \dots, R(u_k)$ that are translates of each other and whose upper-left corners touch a common line in this order. Scale $\Gamma^\star$ such that the arrangement can be placed in the interior of $R_\mu$.  Then connect the subdivision vertices on $B_\mu$ with the suitable rectangles.  This is possible without introducing crossings since the  order of the subdivision vertices on $B_\mu$ defines an order of their end-vertices in $\mu$ which is a subsequence of $u_1,\dots,u_{k-1},u_k,u_{k-1}, \dots, u_2$, while the circular order in which the rectangles occur along the boundary of their arrangement is $R(u_1), \dots,R(u_k),R(u_{k-1}), \dots, R(u_2)$. By treating every other cluster of $T$ analogously, we get a clique-planar representation of $(G',S)$.


Conversely, suppose that $(G',S)$ has a clique-planar representation $\Gamma$, which we assume to be canonical by Lemma~\ref{le:canonical-clique-planar}. We define a set $E^\star$ as follows. For each clique $s\in S$, let $R(u_1),\dots,R(u_k)$ be the order in which the rectangles corresponding to $s$ touch the line with slope $1$ through their upper-left corners in $\Gamma$; add to $E^\star$ all the edges $(u_i,u_{i+1})$, for $i=1,\dots,k-1$. We claim that $E^\star$ is a linear saturator for $(G,T)$. Indeed, by Lemma~\ref{lem:cpl-criterion}, it suffices to show that $G+E^\star$ admits a planar drawing.

\begin{figure}[htb]
  \centering
  \subfigure[]{\includegraphics[page=1]{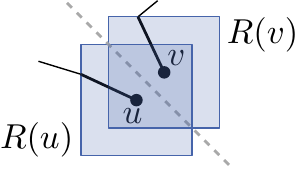}\label{fig:clique-cplanar-a}} \hspace{3mm}
  \subfigure[]{\includegraphics[page=2]{clique-cplanar}\label{fig:clique-cplanar-b}} \hspace{3mm}
  \caption{Construction of a linear saturator from a clique-planar representation.}
  \label{fig:clique-cplanar-reduction}
\end{figure}

Initialize $\Gamma^\star=\Gamma$.  We place each vertex $v$ at the center of square $R(v)$ and remove $R(v)$ from $\Gamma^\star$.  We extend each
edge $(u,v)$ with two straight-line segments from the boundaries of $R(u)$ and $R(v)$ to $u$ and $v$, respectively.  This does not produce crossings; in fact, only the segments of two vertices $u$ and $v$ such that $R(u)$ and $R(v)$ intersect might cross.  However, such segments are separated by the line through the intersection points of the
boundaries of $R(u)$ and $R(v)$; see Fig.~\ref{fig:clique-cplanar-a}. We now draw the edges in $E^\star$ as straight-line segments.
As before, this may not introduce a crossing with any other segment or edge. In fact consider an edge $(u,v)$ in $E^\star$ and any segment $e_w$ incident to a vertex $w\neq u,v$ in the same clique. Assume $u,v,w$ are in this order along the line with slope~1 through them.  Then $(u,v)$ is separated from $e_w$ by the line through the two intersection points of the boundaries of $R(v)$ and
$R(w)$; see Fig.~\ref{fig:clique-cplanar-b}.  This concludes the proof.
\end{proof}


Next, we prove that the \cpl problem is \NPC.

\begin{theorem} \label{th:npc-cp-oneclique}
The \cpl problem is \NPC, even if the underlying graph is a subdivision of a triangulated planar graph and there is just one cluster containing more than one vertex.
\end{theorem}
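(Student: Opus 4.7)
My plan is to establish the theorem by a polynomial-time reduction from Hamiltonian Path in triangulated planar graphs, a problem classically known to be \NPC (derivable in the style of Chv\'atal's proof for Hamiltonian Cycle in maximal planar graphs). Membership of \cpl in \NP is immediate: a non-deterministic algorithm guesses a linear saturator $E^\star$ of polynomial size and, by Lemma~\ref{lem:cpl-criterion}, verifies it by checking planarity of $G+E^\star$ in linear time.

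Given a triangulated planar graph $H=(V,E)$, I construct an instance $(G,T)$ of \cpl as follows. Let $G$ be obtained from $H$ by subdividing every edge of $H$ exactly once; by construction, $G$ is a subdivision of a triangulated planar graph. Let $T$ consist of a root with a single non-trivial child cluster $\mu=V$ together with a singleton cluster for each subdivision vertex. Since every edge of $G$ is incident to a subdivision vertex, the set $\mu$ is an independent set in $G$, so all preconditions of the \cpl problem are satisfied and the restrictions stated in the theorem hold.

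The equivalence between $(G,T)$ admitting a linear saturator and $H$ admitting a Hamiltonian path then follows from Lemma~\ref{lem:cpl-criterion}. In the forward direction, a Hamiltonian path $v_{\pi(1)},\dots,v_{\pi(|V|)}$ of $H$ yields the saturator $E^\star=\{v_{\pi(i)}v_{\pi(i+1)}:1\le i<|V|\}$: the graph $G+E^\star$ is again a subdivision of $H$ (the added edges simply ``undo'' certain subdivisions), and hence planar. In the reverse direction, let $E^\star$ be any linear saturator of $\mu$. For every $v_av_b\in E^\star$, the graph $G+v_av_b$ is a subdivision of $H+v_av_b$; should $v_av_b\notin E$, the maximality of $H$ would force $H+v_av_b$ to be non-planar and therefore $G+v_av_b$ to be non-planar too, contradicting the planarity of $G+E^\star$. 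Thus $E^\star\subseteq E$, so the path that $E^\star$ traces on $\mu$ is a Hamiltonian path of $H$.

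The main technical obstacle I anticipate is the reverse direction, specifically the argument that every saturating edge must be an edge of $H$. This step relies on combining the maximality of $H$ (every non-edge of $H$ creates a non-planar supergraph) with the Kuratowski-style fact that subdividing edges preserves both planarity and non-planarity; once this observation is in place, the rest of the reduction is syntactic and can clearly be carried out in linear time.
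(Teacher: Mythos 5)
Your reduction is structurally the same as the paper's: reduce from a Hamiltonian Path problem on planar graphs, subdivide edges so that the non-trivial cluster is an independent set, and use Lemma~\ref{lem:cpl-criterion} to translate ``linear saturator'' into ``planar supergraph''. Your reverse direction (any saturator edge outside $E(H)$ would make $H$ plus that edge, and hence its subdivision, non-planar by maximality of $H$) is clean and, if anything, more explicit than the paper's corresponding step. A minor imprecision: $G+E^\star$ is not ``again a subdivision of $H$'' --- the subdivision vertices remain, so it is a subdivision of the planar multigraph obtained by doubling the path edges of $H$ --- but the planarity conclusion is correct, e.g.\ by routing each added edge alongside its length-two subdivided path as the paper does.

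The one genuine gap is your source problem. You reduce from {\sc Hamiltonian Path} in \emph{triangulated} (maximal) planar graphs and assert its \NPCN as ``classically known, derivable in the style of Chv\'atal's proof''. The classical result in that vein (Wigderson) concerns Hamiltonian \emph{cycles} in maximal planar graphs; the standard cycle-to-path reductions (adding a duplicated vertex and pendants) destroy maximal planarity, so the hardness of Hamiltonian Path on this restricted class is not a citation you can simply wave at --- it would itself need a proof. The paper avoids exactly this issue: it starts from {\sc Hamiltonian Path} in biconnected planar graphs (a well-established \NPC problem) and \emph{builds} the triangulation inside the reduction by stellating every face with a new vertex $v_f$; these face vertices are placed in singleton clusters and, crucially, the edges incident to them are \emph{not} subdivided, which is what forces every saturator edge to correspond to an edge of the original graph (two cluster vertices share a face of the triangulation iff they were adjacent). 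To repair your proof you should either supply a hardness proof for Hamiltonian Path in maximal planar graphs, or adopt the paper's face-vertex gadget so that you can start from the weaker, properly cited source problem.
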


\begin{proof}
The problem clearly lies in \NP. We give a polynomial-time reduction from the {\sc Hamiltonian Path} problem in biconnected planar graphs~\cite{i-isgci}.

\begin{figure}[htb]
  \centering
  \subfigure[]{\includegraphics[page=2,width=0.25\textwidth]{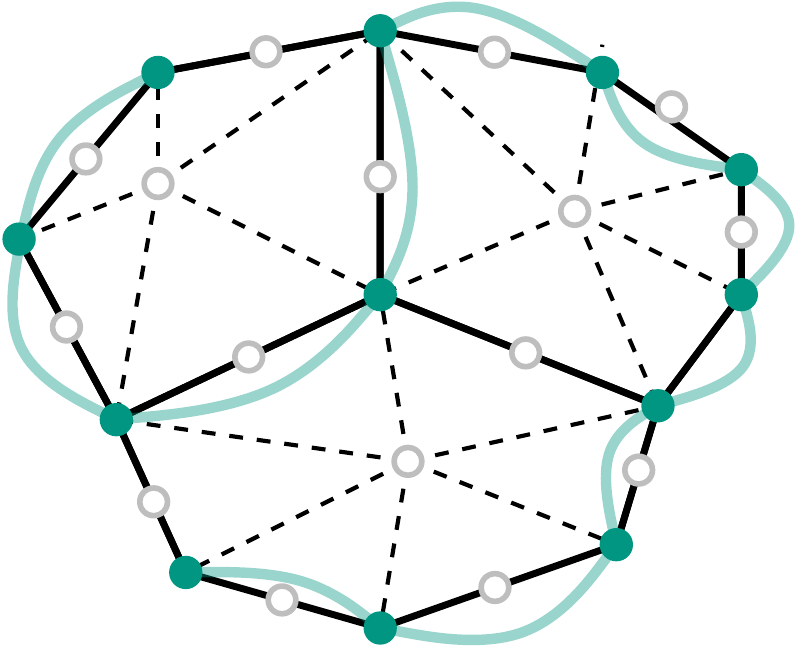}\label{fig:linear-saturation-fixed-embedding-c}} \hspace{8mm}
	\subfigure[]{\includegraphics[page=1,width=0.25\textwidth]{path-fixed-embedding}\label{fig:linear-saturation-fixed-embedding-d}}
  \caption{(a) A biconnected planar graph $G$ with a Hamiltonian path $P$. (b) The clustered graph $(G',T)$ obtained from $G$ and the linear saturator for $(G',T)$ corresponding to $P$.}
  \label{fig:clique-cplanar}
\end{figure}

Given a biconnected planar graph $G$ we construct an instance $(G',T)$ of \cpl that admits a linear saturator if and only if $G$ has a Hamiltonian path. Initialize $G'=G$. Let $\cal E$ be a planar embedding of $G'$, as in Fig.~\ref{fig:linear-saturation-fixed-embedding-c}. For each face $f$, add a vertex $v_f$ inside $f$ and connect it to all the vertices incident to $f$. Since $G$ is biconnected, each face $f$ of $\cal E$ is bounded by a simple cycle, hence $G'$ is a triangulated planar graph. Subdivide with a dummy vertex each edge of $G'$ that is not incident to a vertex $v_f$, for any face $f$ of $\cal E$, as in Fig.~\ref{fig:linear-saturation-fixed-embedding-d}.  Finally, add a cluster $\mu$ to $T$  containing all the vertices of $G$ and, for each of the remaining vertices, add to $T$ a cluster containing only that vertex.

Suppose that $G$ admits a Hamiltonian path $P=v_1,\dots,v_n$ and let
$E^\star = \{(v_i,v_{i+1}) \mid 1 \le i \le n-1\}$. Since $P$ is
Hamiltonian, $E^\star$ is a path connecting $\mu$.  Let $G^\star = G' + E^\star$.  Since every cluster different from $\mu$ contains only one vertex, all clusters of
$(G^\star,T)$ induce paths.  A planar drawing of $G^\star$ can be
obtained from a planar drawing $\Gamma$ of $G'$ as follows.  Note
that, for each edge $(v_i,v_{i+1}) \in E^\star$, vertices $v_i$ and
$v_{i+1}$ share two faces in $\Gamma$ since the dummy vertex added to
subdivide edge $(v_i,v_{i+1})$ has degree $2$. Hence, each saturator
edge $(v_i,v_{i+1})$ can be routed inside one of these faces
arbitrarily close to the length-$2$ path between $v_i$ and
$v_{i+1}$ neither crossing an edge of $G'$ nor another saturator edge.  Thus $G^\star$ is planar, and by Lemma~\ref{lem:cpl-criterion} $E^\star$ is a linear saturator for $(G',T)$.

Conversely, suppose $(G',T)$ admits a linear saturator $E^\star$. We claim that $E^\star$ is a Hamiltonian path of $G$. By construction, the vertices of $\mu$ are exactly the vertices of $G$; also, each edge of $E^\star$ corresponds to an edge of $G$, due to the fact that two vertices of $\mu$ are incident to a common face if and only if they are adjacent in $G$. Hence, the path of $G$ corresponding to $E^\star$ is Hamiltonian.
This concludes the proof.
\end{proof}

\section{Clique-Planarity with Given Vertex Representations} \label{se:fixed-representations}

In this section we show how to test {\sc Clique Planarity} in linear time for instances $(G,S)$ with given vertex representations. That is, a clique-planar representation $\Gamma'$ of $(G',S)$ is given, where $G'$ is obtained from $G$ by removing its link-edges, and the goal is to test whether the link-edges of $(G,S)$ can be drawn in $\Gamma'$ to obtain a clique-planar representation $\Gamma$ of $(G,S)$.

We start with a linear-time preprocessing in which we verify that every vertex of $G$ incident to a link-edge is represented in $\Gamma'$ by a rectangle incident to the outer boundary of the clique it belongs to. If the test fails, the instance is negative. Otherwise, we proceed as follows.

\begin{figure}[htb]
  \centering
  \subfigure[]{\includegraphics[page=1,height=.1\textwidth]{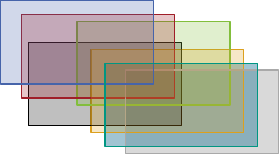}\label{fig:pep-a}}\hspace{1cm}
  \subfigure[]{\includegraphics[page=2,height=.1\textwidth]{pep}\label{fig:pep-b}}\hspace{1cm}
    \subfigure[]{\includegraphics[page=3,height=.1\textwidth]{pep}\label{fig:pep-c}}
  \caption{(a) An intersection-link representation $\Gamma$ of $(K_7,\{s=K_7\})$.
  (b) A simple cycle with a vertex for each maximal portion of the boundary of $\Gamma$ belonging to a single rectangle.
  (c) Planar drawing ${\cal H}'_s$ of graph $H'_s$ corresponding to $\Gamma$.}
  \label{fig:pep}
\end{figure}

We show a reduction to the {\sc Partial Embedding Planarity} problem~\cite{adfjkpr-tppeg-j14}, which asks whether a planar drawing of a graph $H$ exists extending a given drawing ${\cal H}'$ of a subgraph $H'$ of $H$. 

First, we define a connected component $H'_s$ of $H'$ corresponding to a clique $s\in S$ and its drawing ${\cal H}'_s$. We remark that $H'_s$ is a {\em cactus graph}, that is a connected graph that admits a planar embedding in which all the edges are incident to the outer face. Denote by $B$ the boundary of the representation of $s$ in $\Gamma'$ (see Fig.~\ref{fig:pep-a}). If $s$ has one or two vertices, then $H'_s$ is a vertex or an edge, respectively (and ${\cal H}'_s$ is any drawing of $H'_s$). Otherwise,  initialize $H'_s$ to a simple cycle containing a vertex for each maximal portion of $B$ belonging to a single rectangle (see Fig.~\ref{fig:pep-b}). Let ${\cal H}'_s$ be any planar drawing of $H'_s$ with a suitable orientation. Each rectangle in $\Gamma'$ may correspond to two vertices of $H'_s$, but no more than two by Lemma~\ref{le:squares-arrangment}. Insert an edge in $H'_s$ between every two vertices representing the same rectangle and draw it in the interior of ${\cal H}'_s$. By Lemma~\ref{le:two-squares}, these edges do not alter the planarity of ${\cal H}'_s$. Contract the inserted edges in $H'_s$ and ${\cal H}'_s$ (see Fig.~\ref{fig:pep-c}). This completes the construction of $H'_s$, together with its planar drawing ${\cal H}'_s$.  

Graph $H'$ is the union of graphs $H'_s$, over all the cliques $s\in S$; the drawings ${\cal H}'_s$ of $H'_s$ are in the outer face of each other in ${\cal H}'$. Note that, because of the preprocessing, the endvertices of each link-edge of $G$ are vertices of $H'$; then we define $H$ as the graph obtained from $H'$ by adding, for each link-edge $(u,v)$ of $G$, an edge between the vertices of $H'$ corresponding to $u$ and $v$. We have the following:

\begin{lemma} \label{le:pep-correspondence}
There exists a planar drawing of $H$ extending ${\cal H}'$ if and only if there exists a clique-planar representation of $(G,S)$ coinciding with $\Gamma'$ when restricted to $(G',S)$.
\end{lemma}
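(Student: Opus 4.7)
The plan is to prove both directions by direct construction, interpreting the link-edges of $(G,S)$ as the $H\setminus H'$ edges and vice versa, while preserving the cyclic orders imposed on each clique by the rectangle arrangement. The key underlying observation is that, by Lemmas~\ref{le:two-squares} and~\ref{le:squares-arrangment}, the drawing ${\cal H}'_s$ faithfully encodes the circular sequence of maximal rectangle-boundary portions encountered along the outer boundary of the representation of $s$ in $\Gamma'$; in particular, after the contraction step each rectangle of $s$ corresponds to exactly one vertex of $H'_s$, whose local rotation combines the (up to) two ``sides'' from which link-edges may reach that rectangle.

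For the only-if direction, I start from a clique-planar representation $\Gamma$ of $(G,S)$ that coincides with $\Gamma'$ on $(G',S)$, and produce a planar extension of ${\cal H}'$. For each link-edge $(u,v)$, its curve $\gamma_{uv}$ in $\Gamma$ has two endpoints lying on maximal boundary portions of $R(u)$ and $R(v)$; by construction each such portion corresponds uniquely to a vertex of $H'$. I place the new vertices of $H$ at these positions and slightly perturb each $\gamma_{uv}$ so that it terminates at the vertices of $H'$ instead of at the rectangle boundaries. Because the link-edges of $\Gamma$ are pairwise disjoint and avoid the interiors of all rectangles, no crossings are introduced, yielding a planar drawing of $H$ extending ${\cal H}'$.

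For the if direction, I reverse the procedure: given a planar extension of ${\cal H}'$ to a drawing of $H$, I draw each link-edge of $(G,S)$ as the curve of the corresponding edge of $H$, prolonged by a short segment from its endpoint vertex in $H'$ to a suitable point on the boundary of the associated rectangle. The main technical obstacle is to justify that the cyclic order of these short segments around each rectangle is realizable, i.e., that they can be routed without crossing any link-edge, any other short segment, or any rectangle. This follows from the construction of $H'_s$: its outer-face cyclic order is precisely the order of boundary portions of the rectangles of $s$ read along the outer boundary of the representation of $s$ in $\Gamma'$, and the edges inserted and then contracted guarantee that the two boundary portions of a single rectangle may be used interchangeably as attachment points. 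A sufficiently thin neighborhood of the outer boundary of each clique therefore accommodates the required rerouting, producing a clique-planar representation of $(G,S)$ that extends $\Gamma'$.
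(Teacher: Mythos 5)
Your overall strategy is the paper's: a local surgery around each clique that exploits the fact (from Lemmas~\ref{le:two-squares} and~\ref{le:squares-arrangment}) that the outer-face order of the vertices of ${\cal H}'_s$ equals the order of the maximal rectangle portions along the boundary $B$ of the representation of $s$ in $\Gamma'$. Both directions are present in outline, and the ``if'' direction (from a drawing of $H$ to a clique-planar representation) is essentially the paper's argument: cut out a thin neighbourhood of ${\cal H}'_s$, paste the prescribed rectangle arrangement, and reroute using the matching cyclic orders.

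There is, however, a gap in the other direction. Starting from a clique-planar representation $\Gamma$ coinciding with $\Gamma'$, you place the vertices of $H'$ on the rectangle boundaries and perturb the link-edge curves, but you never draw the edges of $H'$ itself (the cycle of $H'_s$ and the contracted chords), and you never argue that the restriction of the resulting drawing to $H'$ is the \emph{given} drawing ${\cal H}'$ --- which is exactly what ``extending ${\cal H}'$'' requires in {\sc Partial Embedding Planarity}. What you produce is a crossing-free drawing of the link-edges together with repositioned vertices; to finish, you must (as the paper does) replace the representation of each clique $s$ by a scaled copy of ${\cal H}'_s$ placed inside a polyline $P_s$ that each link-edge crosses at most once, reroute the link-edges to the correct vertices using the order-matching, and then apply a homeomorphism of the plane so that the restriction to $H'$ literally coincides with ${\cal H}'$. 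A second, smaller issue: your claim that after contraction ``the two boundary portions of a single rectangle may be used interchangeably as attachment points'' is not correct and not needed. In a planar extension, the rotation at the contracted vertex $w$ is split by the contracted edge and the two incident cycle edges into two intervals, one per original boundary portion; each edge of $H\setminus H'$ incident to $w$ is thereby assigned to a specific portion of $B$, and it is this determined assignment (not interchangeability) that makes the rerouting along $B$ crossing-free.
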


\begin{proof}
Let $\cal H$ be a planar drawing of $H$ extending ${\cal H}'$. We construct a clique-planar representation $\Gamma$ of $(G,S)$ as follows. Initialize $\Gamma={\cal H}$. For each clique $s \in S$, consider a closed polyline $P_s$ close to ${\cal H}'_s$ so that it contains all and only the vertices and edges of $H'_s$ in its interior and it crosses at most once every other edge of $H$. Scale $\Gamma$ so that, for every clique $s \in S$, a rectangle which is the bounding box of the representation of $s$ in $\Gamma'$ fits in the interior of $P_s$. Remove the interior of $P_s$ and put in its place a copy of the representation of $s$ in $\Gamma'$. Reroute the curves representing link-edges from the border of $P_s$ to the suitable ending rectangles. This can be done without introducing any crossings, because the vertices of $H'_s$ appear along the walk delimiting the outer face of ${\cal H}'_s$ in the same order as the corresponding rectangles appear along the boundary $B$ of the representation of $s$ in $\Gamma'$, by construction. Finally, a homeomorphism of the plane can be exploited to translate the representation of each clique to the position it has in $\Gamma'$, while maintaining the clique planarity of the representation.

Let $\Gamma$ be a clique-planar representation of $(G,S)$. We construct a planar drawing $\cal H$ of $H$ extending ${\cal H}'$ as follows. Initialize ${\cal H}=\Gamma$. For each clique $s \in S$, consider a closed polyline $P_s$  close to the representation of $s$ in $\cal H$ so that it contains all and only the rectangles representing vertices of $s$ and it crosses at most once each curve representing a link-edge of $G$. Remove the interior of $P_s$ and put in its place a scaled copy of ${\cal H}'_s$. Reroute the curves representing link-edges from the border of $P_s$ to the suitable endvertices. As in the previous direction, this can be done without introducing any crossings. Finally, a homeomorphism of the plane can be exploited to transform ${\cal H}$ into a planar drawing that coincides with ${\cal H}'$ when restricted to $H$. 

This concludes the proof of the lemma.
\end{proof}


We get the following main theorem of this section.

\begin{theorem} \label{th:pep-correspondence}
{\sc Clique Planarity} can be decided in linear time for a pair $(G,S)$ if the rectangle representing each vertex of $G$ is given as part of the input.
\end{theorem}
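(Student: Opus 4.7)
The plan is to assemble the pieces developed earlier in the section into a linear-time decision procedure and then invoke Lemma~\ref{le:pep-correspondence} for correctness. First, I would run the preprocessing already described: scan every vertex $v$ of $G$ incident to a link-edge and verify, using the given rectangle coordinates for the clique containing $v$, that $R(v)$ touches the outer boundary of that clique; if not, output \textsc{No}. This check can be performed in linear time by, for each clique $s\in S$, computing the boundary of the union of the rectangles of $s$ (recall by Lemma~\ref{le:squares-arrangment} this boundary visits each rectangle at most twice), and marking those rectangles incident to it.

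Second, I would build the graph $H$ together with its subgraph $H'$ and the partial planar drawing $\mathcal{H}'$ exactly as prescribed just before Lemma~\ref{le:pep-correspondence}. For each clique $s\in S$, I construct the cactus $H'_s$ by walking along the boundary of the representation of $s$ in $\Gamma'$, emitting one cycle vertex per maximal boundary portion of each rectangle, and then identifying the two vertices that correspond to the same rectangle (which, by Lemma~\ref{le:squares-arrangment} and Lemma~\ref{le:two-squares}, can happen at most once and in a non-interleaving way, so planarity of $\mathcal{H}'_s$ is preserved). The drawings $\mathcal{H}'_s$ are placed in the outer face of one another. Finally, for each link-edge $(u,v)$ of $G$, I add an edge in $H$ between the $H'$-vertex obtained from $R(u)$ and the one obtained from $R(v)$. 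The preprocessing guarantees that such $H'$-vertices exist. All these steps are linear: the total size of all $H'_s$ is $O(\sum_{s\in S}|s|)=O(|V(G)|)$ by Lemma~\ref{le:squares-arrangment}, and the number of added edges is the number of link-edges of $G$, so $|V(H)|+|E(H)|\in O(|V(G)|+|E(G)|)$.

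Third, I would invoke the linear-time algorithm for \textsc{Partial Embedding Planarity} of~\cite{adfjkpr-tppeg-j14} on the pair $(H,\mathcal{H}')$. By Lemma~\ref{le:pep-correspondence}, $(H,\mathcal{H}')$ admits a planar extension if and only if $(G,S)$ admits a clique-planar representation that coincides with $\Gamma'$ on $(G',S)$, so the answer returned by that algorithm is the correct answer for our instance.

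The only potentially delicate point is ensuring that the construction of $H'$ and $\mathcal{H}'$ really runs in linear time and produces a \emph{combinatorial} partial embedding suitable as input to the \textsc{Partial Embedding Planarity} algorithm, rather than a geometric object depending on the rectangle coordinates. This is handled by extracting the cyclic order of boundary pieces of each clique $s$ from the input coordinates via a single sweep around the union of the rectangles of $s$, and by noting that the identifications (contractions) prescribed in the construction of $H'_s$ preserve planarity, as already observed using Lemmas~\ref{le:two-squares} and~\ref{le:squares-arrangment}. Once the combinatorial embedding $\mathcal{H}'$ is in hand, the rest is a black-box call to the algorithm of~\cite{adfjkpr-tppeg-j14}, yielding the claimed linear running time.
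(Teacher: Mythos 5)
Your proposal is correct and follows essentially the same route as the paper: run the boundary preprocessing, build $(H,\mathcal{H}')$ as prescribed before Lemma~\ref{le:pep-correspondence} in linear time using Lemmas~\ref{le:two-squares} and~\ref{le:squares-arrangment} to bound the boundary complexity and justify the contractions, and then call the linear-time \textsc{Partial Embedding Planarity} algorithm of~\cite{adfjkpr-tppeg-j14}. The only detail the paper adds that you omit is an initial $O(|s|)$-time validity check that the given rectangles of each clique $s$ are indeed pairwise intersecting (via the extremal corner coordinates), since the theorem's input is just a set of rectangles rather than a guaranteed clique-planar representation of $(G',S)$.
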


\begin{proof}
First, we check whether, for each $s\in S$, all the rectangles representing vertices in $s$ are pairwise intersecting. This can be done in $O(|s|)$ time by computing the maximum $x$- and $y$-coordinates $x_M$ and $y_M$ among all bottom-left corners, the minimum $x$- and $y$-coordinates $x_m$ and $y_m$ among all top-right corners, and by checking whether $x_M$$<$$x_m$ and $y_M$$<$$y_m$. The described reduction to {\sc Partial Embedding Planarity} can be performed in linear time by traversing the boundary $B$ of each clique $s\in S$; namely, as a consequence of Lemma~\ref{le:squares-arrangment}, $B$ has linear complexity. Contracting an edge requires merging the adjacency lists of its endvertices; this can be done in constant time since these vertices have constant degree, again by Lemma~\ref{le:squares-arrangment}. Further, the {\sc Partial Embedding Planarity} problem can be solved in linear time~\cite{adfjkpr-tppeg-j14}.
\end{proof}

\section{Testing Clique Planarity for Graphs composed of Two Cliques} \label{se:2-cliques}

In this section we study the {\sc Clique Planarity} problem for pairs $(G,S)$ such that $|S|=2$. Observe that, if $|S|=1$, then the {\sc Clique Planarity} problem is trivial, since in this case $G$ is a clique with no link-edge, hence a clique-planar representation of $(G,S)$ can be easily constructed. The case in which $|S|=2$ is already surprisingly non-trivial. Indeed, we could not determine the computational complexity of {\sc Clique Planarity} in this case. However, we establish the equivalence between our problem and a book embedding problem whose study might be interesting in its own; by means of this equivalence we show a polynomial-time algorithm for a special version of the {\sc Clique Planarity} problem. This book embedding problem is defined as follows. 

A {\em $2$-page book embedding} is a plane drawing of a graph where the vertices are cyclically arranged along a closed curve $\ell$, called the {\em spine}, and each edge is entirely drawn in one of the two regions of the plane delimited by $\ell$. The {\sc $2$-Page Book Embedding} problem asks whether a $2$-page book embedding exists for a given graph. This problem is \NPC~\cite{w-chcpmpg-82}. 

Now consider a bipartite graph $G(V_1\cup V_2,E)$. A {\em bipartite $2$-page book embedding} of $G$ is a $2$-page book embedding such that all vertices in $\blue{V_1}$ occur consecutively along the spine (and all vertices in $\red{V_2}$ occur consecutively, as well). We call the corresponding embedding problem {\sc Bipartite $2$-Page Book Embedding} ({\sc b2pbe}).

Finally, we define a {\em bipartite $2$-page book embedding with spine crossings} ({\sc b2pbesc}), as a bipartite $2$-page book embedding in which edges are not restricted to lie in one of the two regions delimited by $\ell$, but each of them might cross $\ell$ once. These crossings are only allowed to happen in the two portions of $\ell$ delimited by a vertex of $V_1$ and a vertex of $V_2$. We call the corresponding embedding problem {\sc Bipartite $2$-Page Book Embedding with Spine Crossings} ({\sc b2pbesc}).


We now prove that the {\sc b2pbesc} problem is equivalent to {\sc Clique Planarity} for instances $(G,S)$ such that $|S|=2$. Consider any instance $(G',\{s_1,s_2\})$ of the {\sc Clique Planarity}  problem. An instance \bookinstance{}  of the {\sc b2pbesc} problem can be defined in which $V_1$ is the vertex set of $s_1$ and $V_2$ is the vertex set of $s_2$; also, $E$ consists of all the link-edges of $G'$. Conversely, given an instance \bookinstance{} of {\sc b2pbesc}, an instance $(G',\{s_1,s_2\})$ of {\sc Clique Planarity} can be constructed in which $s_1$ is a clique on $V_1$ and $s_2$ is a clique on $V_2$; the set of link-edges of $G'$ coincides with $E$. Observe that, since link-edges only connect vertices of different cliques and since edges of $E$ only connect a vertex of $V_1$ to one of $V_2$, each mapping generates a valid instance for the other problem. Also, these mappings define a bijection, hence the following lemma establishes the equivalence between the two problems.

\begin{lemma} \label{le:bipartite-equivalence}
$(G',\{s_1,s_2\})$ is clique-planar if and only if \bookinstance{} admits a {\sc b2pbesc}.
\end{lemma}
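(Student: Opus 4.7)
The plan is to prove both directions of the equivalence by an explicit topological construction, leveraging the canonical form of clique-planar representations (Lemma~\ref{le:canonical-clique-planar}) and the boundary structure of a clique representation (Lemma~\ref{le:squares-arrangment}). For each clique $s_i$, I will use the linear order $u_1^i, \dots, u_{k_i}^i$ of its vertices given by the upper-left corners along the common slope-$1$ line in a canonical representation. The key preliminary observation is that the boundary of $s_i$'s representation is a simple closed curve that decomposes into an ``ascending arc'' visiting the rectangles through their upper-left corners in the order $u_1^i, \dots, u_{k_i}^i$ and a ``descending arc'' visiting them in reverse through their lower-right corners; the two arcs meet inside the boundary portions of the endpoint rectangles $R(u_1^i)$ and $R(u_{k_i}^i)$, whose contribution to the boundary is a single maximal portion that wraps around the end of the clique.

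For the forward direction, starting from a canonical clique-planar representation $\Gamma$, I will select two disjoint arcs $\alpha_1, \alpha_2$ in the annular complement of the two clique representations, with $\alpha_1$ (resp.\ $\alpha_2$) connecting the boundary portion of $R(u_{k_1}^1)$ (resp.\ $R(u_1^1)$) to the corresponding endpoint boundary portion of $s_2$. Contracting each rectangle to a single point placed at the appropriate spot, and then deforming the plane so that the two clique boundaries, joined via $\alpha_1$ and $\alpha_2$, become a single simple closed curve $\ell$, will produce a spine on which $V_1$ and $V_2$ appear as two consecutive arcs in the slope-$1$ orders determined by the two cliques, with $\alpha_1, \alpha_2$ becoming the two transition portions. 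Each link-edge of $\Gamma$ that lies entirely on one side of both bridges becomes a page edge; an edge crossing exactly one of $\alpha_1, \alpha_2$ becomes a spine-crossing edge at the corresponding transition; the disjointness of $\alpha_1, \alpha_2$ and the non-crossing of link-edges in $\Gamma$ yield a valid {\sc b2pbesc}.

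For the backward direction I will reverse the construction: given a {\sc b2pbesc} $\Psi$, I place canonical representations of $s_1$ and $s_2$ in the plane with vertex orders matching those on the spine, then deform the two transition arcs of $\Psi$ into two disjoint arcs $\alpha_1, \alpha_2$ in the annular complement connecting the endpoint boundary portions of the two cliques, and finally route each link-edge according to its role in $\Psi$ (page edges on the appropriate side of the bridges, spine-crossing edges going across the corresponding bridge). I expect the main obstacle to be the handling of the endpoint rectangles $R(u_1^i)$ and $R(u_{k_i}^i)$: since their boundary contribution is a single wrap-around portion spanning both the ascending and descending arcs of the clique boundary, the bridges must attach precisely within these portions, and the page assignment of link-edges attached there depends delicately on the relative position of the attachment point and the bridge endpoint. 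Confirming that the two constructions are mutually inverse and that both produce non-crossing drawings will require a careful topological argument, using the disjointness of $\alpha_1, \alpha_2$ and the planarity of the source drawing to control the behaviour of link-edges across each bridge.
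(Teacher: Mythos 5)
Your proposal follows essentially the same route as the paper's proof: both directions linearize each clique into an ordered sequence of vertices using the boundary order guaranteed by Lemma~\ref{le:squares-arrangment} (the doubled-path pattern $R(u_1),\dots,R(u_k),R(u_{k-1}),\dots,R(u_2)$ with the end rectangles attaching the transitions), join the two cliques by two disjoint transition arcs to form the spine, and argue that each link-edge crosses at most one of these arcs at most once. The only caveat is the step ``contracting each rectangle to a single point'': since the rectangles of a clique pairwise intersect, they cannot be contracted to distinct points by a planar quotient map; the paper instead places each vertex $u_i$ at the point where a curve $\ell_1$ running through the interiors of $R(u_1),\dots,R(u_k)$ enters $R(u_i)$ and then reroutes the link-edges from the arrangement's boundary inward to these points --- this achieves exactly what you intend (including the folding of the descending-arc attachments onto the opposite page) and is the formulation you should adopt.
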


\begin{proof}
Suppose that there exists a {\sc b2pbesc} $\cal B$ of \bookinstance{}. We construct a clique-planar representation $\Gamma$ of $(G',S)$ as follows. Initialize $\Gamma={\cal B}$. Relabel the vertices in $V_1$ (resp. in $V_2$) as $u_1,\dots,u_k$ (resp. $v_1,\dots,v_h$) according to the order in which they appear along $\ell$. Draw a closed curve $\lambda_1$ ($\lambda_2$) enclosing a portion of the spine $\ell$ containing all and only vertices $u_1,\dots,u_k$ (resp. $v_1,\dots,v_h$). Scale $\Gamma$ so that $\lambda_1$ and $\lambda_2$ are large enough to contain a square of size $(1+\epsilon) \times (1+\epsilon)$ in their interiors, with $\epsilon>0$. Remove the interior of $\lambda_1$ and $\lambda_2$. Draw pairwise-intersecting unit squares $Q(u_1),\dots,Q(u_k)$ (resp. $Q(v_1),\dots, Q(v_h)$) all in the interior of $\lambda_1$ (resp. of $\lambda_2$) with their upper-left corners in this order along a common line $l_1$ (resp. $l_2$).  Reroute the curves representing portions of link-edges from the border of $\lambda_1$ and $\lambda_2$ to the suitable ending squares inside them. This can be done without introducing any crossings, because the vertices of $V_1$ ($V_2$) appear along $\ell$ in the same order as the corresponding squares touch $l_1$ (resp. $l_2$); also, the portion of a link-edge connecting a point on $\lambda_1$ with a point on $\lambda_2$ is contained in the original drawing of the edge of $G$, hence no two such portions cross each other. Thus, $\Gamma$ is a clique-planar representation of $(G',S)$.

\begin{figure}[htb]
\centering
        \includegraphics[height=.3\textwidth,page=1]{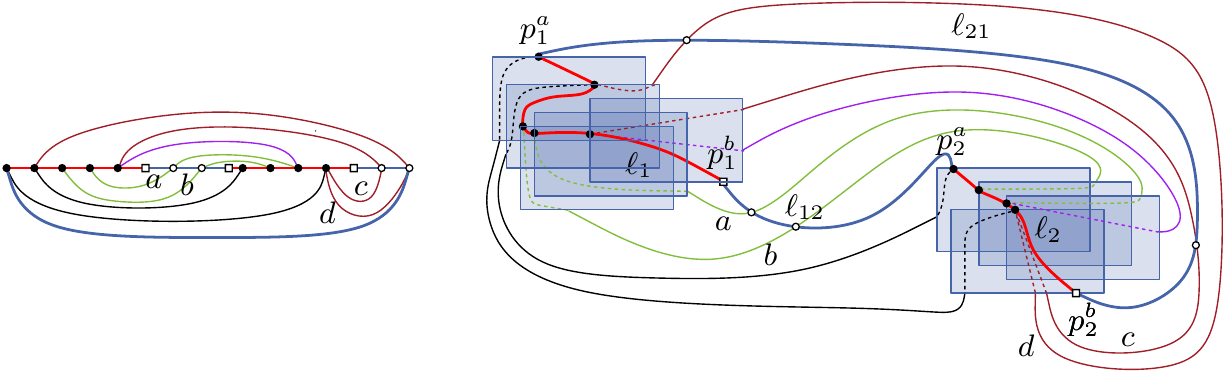} \label{fig:2cliquetoSEFE-a}
        \caption{Constructing a {\sc b2pbesc} from a clique-planar representation.}
\label{fig:book-to-clique}
    \end{figure}

Suppose that there exists a clique-planar representation $\Gamma$ of $(G',\{s_1,s_2\})$. We construct a {\sc b2pbesc} $\cal B$ of \bookinstance{} as follows. Refer to Fig.~\ref{fig:book-to-clique}. Initialize ${\cal B}=\Gamma$. By Lemma~\ref{le:squares-arrangment}, the order in which the rectangles representing the vertices $u_1, \dots, u_k$ in $s_1$ are encountered when traversing the boundary $B_1$ of their arrangement clockwise, is a subsequence of $R(u_1), \dots, R(u_k), R(u_{k-1}), \dots, R(u_2)$. Hence, for any two points $p_1^a$ on $R(u_1) \cap B_1$ and $p_1^b$ on $R(u_k) \cap B_1$, there exists a curve $\ell_1$ between $p_1^a$ and $p_1^b$ entering $R(u_1), \dots, R(u_k)$ in this order. Place vertex $u_i$ of $V_1$ at the point where $\ell_1$ enters $R(u_i)$. Define $\ell_2$, $p_2^a$ and $p_2^b$, and draw the vertices of $V_2$ analogously. Further, add to $\cal B$ two curves $\ell_{12}$ and $\ell_{21}$, not intersecting each other, not intersecting the same link-edge, each intersecting a link edge at most once, and connecting $p_1^a$ to $p_2^b$, and $p_2^a$ to $p_1^b$, respectively. Reroute the curves representing portions of the edges in $E$ from $B_1$ and $B_2$ to the suitable ending vertices inside them. This can be done without introducing any crossings, because the vertices of $V_1$ ($V_2$) appear along $\ell_1$ (along $\ell_2$) in the same order as the corresponding rectangles appear along $B_1$ (along $B_2$) in $\Gamma$, by construction. Finally, consider the curve $\ell$ composed of $\ell_1$, $\ell_{12}$, $\ell_2$, and $\ell_{21}$. We have that all the vertices of $V_1$ (of $V_2$) appear consecutively along $\ell$, since they all lie on $\ell_1$ (on $\ell_2$); also, each edge $e \in E$ crosses $\ell$ at most once, either on $\ell_{12}$ or on $\ell_{21}$. Hence, $\cal B$ is a {\sc b2pbesc} of \bookinstance{}. This concludes the proof of the lemma.
\end{proof}


We now consider a variant of the {\sc Clique Planarity} problem for two cliques in which each clique is associated with a $2$-partition of the link-edges incident to it, and the goal is to construct a clique-planar representation in which the link-edges in different sets of the partition exit the clique on ``different sides''. This constraint finds a correspondence with the variant of the (non-bipartite) $2$-page book embedding problem, called {\sc Partitioned $2$-page book embedding} problem, in which vertices are allowed to be arbitrarily permuted along the spine, while the edges are pre-assigned to the pages of the book~\cite{hn-sattpbep-14}. 

\begin{figure}[htb]
\centering
        \includegraphics[height=.2\textwidth]{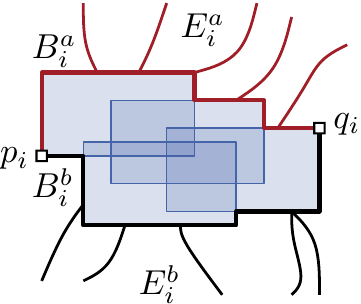} 
        \caption{An intersection-link representation $\Gamma_i$ of $s_i$. The top side of $\Gamma_i$ and the link-edges in $E^a_i$ are red, while the bottom side of $\Gamma_i$ and the link-edges in $E^b_i$ are black.}
\label{fig:clique-notation}
\end{figure}

More formally, let $(G,S=\{s_1,s_2\})$ be an instance of {\sc Clique Planarity} and let $\{E^a_i, E^b_i\}$ be a partition of the link-edges incident to $s_i$, with $i \in \{1,2\}$. Consider an intersection-link representation $\Gamma_i$ of $s_i$ with outer boundary $B_i$, let $p_i$ be the bottom-left corner of the leftmost rectangle in $\Gamma_i$, and let $q_i$ be the upper-right corner of the rightmost rectangle in $\Gamma_i$; see Fig.~\ref{fig:clique-notation}. Let $B^a_i$ be the part of $B_i$ from $p_i$ to $q_i$ in clockwise direction (this is the {\em top side} of $\Gamma_i$) and let $B^b_i$ be the part of $B_i$ from $q_i$ to $p_i$ in clockwise direction (this is the {\em bottom side} of $\Gamma_i$). We aim to construct a clique-planar representation of $(G,S)$ in which all the link-edges in $E^a_i$ (resp. in $E^b_i$) intersect the arrangement $\Gamma_i$ of rectangles representing $s_i$ on the top side (resp. on the bottom side) of $\Gamma_i$. We call the problem of determining whether such a representation exists {\sc $2$-Partitioned Clique Planarity}. We prove that {\sc $2$-Partitioned Clique Planarity} can be solved in quadratic time. 

The algorithm is based on a reduction to equivalent special instances of {\sc Simultaneous Embedding with Fixed Edges} (\sefe) that can be decided in quadratic time. Given two graphs \Gr{} and \Gb{} on the same vertex set $V$, the \sefe problem asks to find planar drawings of $G_1$ and $G_2$ that coincide on $V$ and on the common edges of \Gr{} and \Gb{}. We have the following.

\begin{lemma}\label{le:2cliquetoSEFE}
Let $(G,\{s_1,s_2\})$ and $\{E^a_1, E^b_1,E^a_2, E^b_2\}$ be an instance of {\sc $2$-Partitioned Clique Planarity}. An equivalent instance \sefeinstance{} of \sefe such that $\red{G_1}=(V,\red{E_1})$ and $\blue{G_2}=(V,\blue{E_2})$ are $2$-connected and such that the common graph $G_\cap = (V,\red{E_1}\cap\blue{E_2})$ is connected can be constructed in linear time.
\end{lemma}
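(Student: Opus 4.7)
\medskip\noindent\emph{Proof plan.} The plan is to encode each clique $s_i$ as a ``frame gadget'' within the SEFE instance, exploiting Lemma~\ref{le:squares-arrangment}: the rectangles of $s_i$ appear along the boundary of the arrangement in the cyclic order $R(u_1),R(u_2),\dots,R(u_n),R(u_{n-1}),\dots,R(u_2)$ for some linear order $u_1,\dots,u_n$ of $s_i$. This linear order, together with the interleaving of the link-edges along the top and bottom sides, is exactly what the SEFE must decide.

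Concretely, for each clique $s_i$ I will introduce two fresh auxiliary vertices $a_i$ (top anchor) and $b_i$ (bottom anchor), and add to $G_\cap$ all $2|s_i|$ edges $(a_i,u)$ and $(b_i,u)$ for $u\in s_i$, forming a $K_{2,|s_i|}$. The key property I will exploit is that in every planar embedding of $K_{2,n}$ the cyclic order of the degree-$2$ vertices around $a_i$ is the reverse of the one around $b_i$; this mirrors exactly the ``top equals reverse of bottom'' structure given by Lemma~\ref{le:squares-arrangment}. The link-edges of $E^a_i$ are then placed only in $\red{G_1}$ and those of $E^b_i$ only in $\blue{G_2}$, so that the side on which a link-edge leaves its clique is enforced by graph membership. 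To guarantee connectivity of $G_\cap$ and $2$-connectivity of both $\red{G_1}$ and $\blue{G_2}$, I will add to $G_\cap$ a short cycle through the four anchors $a_1,b_1,a_2,b_2$, plus a few further stabilizing edges local to each gadget; each of these additions has size $O(|s_i|)$, yielding an overall linear construction time.

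Equivalence will then be shown in two directions. From a $2$-partitioned clique-planar representation of $(G,\{s_1,s_2\})$ I will build a SEFE by placing $a_i$ next to the top side $B^a_i$, placing $b_i$ next to the bottom side $B^b_i$, and routing the $K_{2,|s_i|}$ edges along non-crossing arcs, which is possible by Lemmas~\ref{le:two-squares} and~\ref{le:squares-arrangment}; the link-edges of $E^a_i$ and $E^b_i$ already lie in planar subdrawings of $\red{G_1}$ and $\blue{G_2}$ respectively. Conversely, a SEFE of \sefeinstance{} fixes a cyclic order of $s_i$ around $a_i$; by the $K_{2,n}$ property this is the reverse of the cyclic order around $b_i$, and together they determine the linear order of a canonical arrangement of the clique (Lemma~\ref{le:canonical-clique-planar}) into which the planar drawings of $\red{G_1}$ and $\blue{G_2}$ can be transformed so that $E^a_i$ attaches to the top side and $E^b_i$ to the bottom side.

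The step I expect to be the hard part is choosing the stabilizing edges so that two requirements are met simultaneously: the embedding flexibility left in the SEFE corresponds exactly to the choice of a linear order of each clique together with a compatible interleaving of the link-edges on the two sides; and both $\red{G_1}$ and $\blue{G_2}$ become $2$-connected while $G_\cap$ remains connected and planar. The $K_{2,n}$ gadget already rules out many spurious embeddings, but extra care will be needed to preclude degenerate situations in which, for instance, an anchor $a_i$ or $b_i$ is enclosed by the wrong side of a frame, or a link-edge is forced to cross the $K_{2,n}$.
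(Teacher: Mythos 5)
There is a genuine gap, and it sits exactly where your plan is least specific. First, your rule for distributing the link-edges --- ``edges of $E^a_i$ go only in $G_1$, edges of $E^b_i$ go only in $G_2$'' --- is ill-defined for half of the cases: since every link-edge is incident to both cliques, each edge carries \emph{two} labels, one from $\{E^a_1,E^b_1\}$ and one from $\{E^a_2,E^b_2\}$, and an edge in $E^a_1\cap E^b_2$ (or $E^b_1\cap E^a_2$) would have to be simultaneously exclusive to $G_1$ and exclusive to $G_2$. These mixed edges are precisely the ones that must leave $s_1$ on one side and $s_2$ on the other, i.e., in the book-embedding view of Lemma~\ref{le:bipartite-equivalence} they must cross the spine once. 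Handling them requires subdividing each such edge at an explicit ``crossing vertex'' placed in the common graph, putting one half in $G_1$ and the other in $G_2$, and --- crucially --- synchronizing the order of these crossing vertices as seen from the $s_1$ side with their order as seen from the $s_2$ side. This synchronization is the heart of the paper's construction (the leaves $w(e),z(e),z'(e),x(e),x'(e),r_1(e),r_2(e)$ distributed over the stars and trees $Q_1,Q_2,T_1,\dots,T_4,R_1,R_2$); your proposal has no counterpart for it.

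Second, membership of a link-edge in $G_1$ versus $G_2$ does not by itself force it to attach to the clique on the top versus the bottom side. In your $K_{2,|s_i|}$ gadget every bounded face is a quadrilateral incident to \emph{both} anchors $a_i$ and $b_i$, so at a vertex $u$ there is no face that is ``toward $a_i$'' rather than ``toward $b_i$''; a $G_1$-exclusive edge at $u$ can be embedded in the rotation at $u$ on either side of the pair $(a_i,u),(b_i,u)$ without violating planarity of $G_1$. Creating the inside/outside separation is exactly what the unspecified ``stabilizing edges'' would have to do, and it is also what must deliver the structural guarantees the lemma promises ($G_1$ and $G_2$ $2$-connected, $G_\cap$ connected), which are needed later to invoke the quadratic-time \sefe algorithm; the paper even has to assume, w.l.o.g.\ via dummy vertices, that $E^a_1\cap E^a_2\neq\emptyset$ to obtain $2$-connectivity. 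The $K_{2,n}$ mirror-order idea correctly captures the consequence of Lemma~\ref{le:squares-arrangment} (top order equals reversed bottom order), so the starting intuition is sound, but as written the reduction neither encodes the spine-crossing edges nor enforces the side assignment, and these are the two points where the equivalence would fail.
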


\begin{proof}
By Lemma~\ref{le:bipartite-equivalence}, we can describe $(G,\{s_1,s_2\})$ by its equivalent instance \bookinstance{} of {\sc b2pbesc}, where $V_1$ is the vertex set of $s_1$, $V_2$ is the vertex set of $s_2$, and $E$ is the set of link-edges of $(G,\{s_1,s_2\})$. The partition $\{E^a_i, E^b_i\}$ of the edges incident to $s_i$ translates to constraints on the side of the spine $\ell$ of the {\sc b2pbesc} each of these edges has to be incident to. Namely, for each vertex $u_i \in V_1$, all the edges in $E^a_1$ (in $E^b_1$) incident to $u_i$ have to exit $u_i$ from the internal (resp.\ external) side of $\ell$; and analogously for the edges of $E^a_2$ and $E^b_2$. This implies that edges in $E^a_1 \cap E^a_2$ entirely lie inside $\ell$, edges in $E^b_1 \cap E^b_2$ entirely lie outside $\ell$, while the other edges have to cross $\ell$.

\begin{figure}[htb]
\centering
        {\includegraphics[width=\textwidth,page=2]{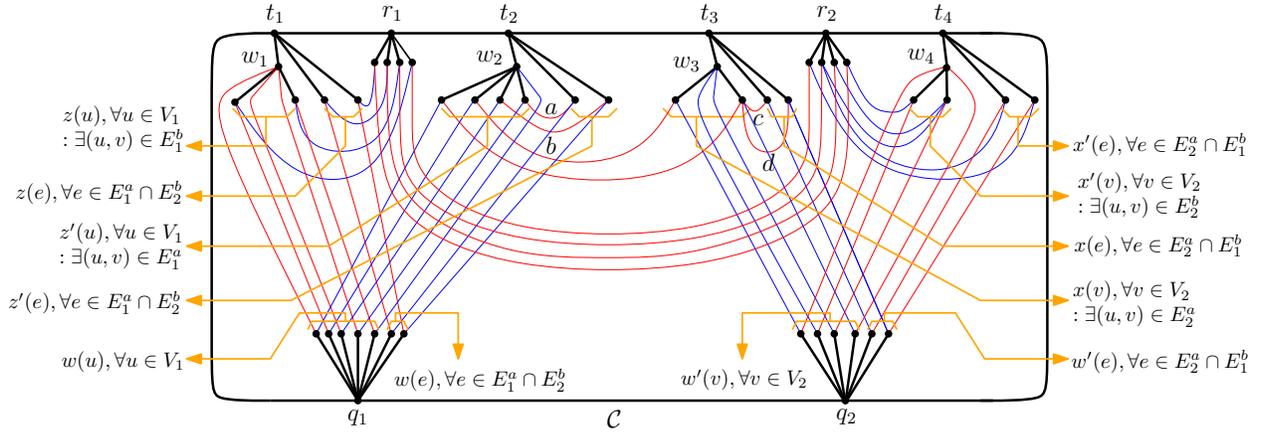} \label{fig:2cliquetoSEFE-b}}
        \caption{The {SEFE} $\langle \red{\Gamma_1},\blue{\Gamma_2}\rangle $ of \sefeinstance{} corresponding to $\Gamma$ from Fig.~\ref{fig:book-to-clique}.}\label{fig:2cliquetoSEFE}
    \end{figure}

We now describe how to construct \sefeinstance{}.  Refer to Fig.~\ref{fig:2cliquetoSEFE}. 

The common graph \Gint contains a cycle $\mathcal{C} = (t_1, r_1, t_2, t_3, r_2, t_4, q_2, q_1)$.  Also, \Gint contains two stars $Q_1$ and $Q_2$ centered at $q_1$ and $q_2$, respectively, where $Q_1$ has a leaf $w(u_i)$ for each vertex $u \in V_1$ and a leaf $w(e)$ for each edge $e \in E^a_1 \cap E^b_2$, and where $Q_2$ has a leaf $w'(v)$ for each vertex $v \in V_2$ and a leaf $w'(e)$ for each edge $e \in E^b_1 \cap E^a_2$. Further, \Gint contains trees $T_i$ rooted at $t_i$, for $i=1,\dots,4$, defined as follows.

Tree $T_1$ contains a leaf $z(e)$ adjacent to $t_1$ for each edge $e \in E^a_1 \cap E^b_2$; also, it contains a vertex $w_1$ adjacent to $t_1$; finally, it contains a leaf $z(u)$, adjacent to $w_1$, for each vertex $u \in V_1$ that is incident to at least one edge in $E^b_1$.

Tree $T_2$ contains a leaf $z'(e)$ adjacent to $t_2$ for each edge $e \in E^a_1 \cap E^b_2$; also, it contains a vertex $w_2$ adjacent to $t_2$; finally, it contains a leaf $z'(u)$, adjacent to $w_2$, for each vertex $u \in V_1$ that is incident to at least one edge in $E^a_1$.

Tree $T_3$ contains a leaf $x(e)$ adjacent to $t_3$ for each edge $e \in E^a_2 \cap E^b_1$; also, it contains a vertex $w_3$ adjacent to $t_3$; finally, it contains a leaf $x(v)$, adjacent to $w_3$, for each vertex $v \in V_2$ that is incident to at least one edge in $E^a_2$.

Tree $T_4$ contains a leaf $x'(e)$ adjacent to $t_4$ for each edge $e \in E^a_2 \cap E^b_1$; also, it contains a vertex $w_4$ adjacent to $t_4$; finally, it contains a leaf $x'(v)$, adjacent to $w_4$, for each vertex $v \in V_2$ that is incident to at least one edge in $E^b_2$.

Finally, \Gint contains two stars $R_1$ and $R_2$ centered at $r_1$ and $r_2$, respectively, with the same number of leaves as $T_1$. Namely, $R_1$ ($R_2$) contains a leaf $r_1(e)$ (a leaf $r_2(e)$) adjacent to $r_1$ (to $r_2$, resp.) for each edge $e \in E^a_1 \cap E^b_2$; also, it contains a leaf $r_1(u)$ (a leaf $r_2(u)$, resp.) for each vertex $u \in V_1$ that is incident to at least one edge in $E^b_1$.

Graph \Gr{} contains \Gint plus the following edges. Consider each edge $e = (u,v)$ with $u \in V_1$ and $v \in V_2$.
If $e \in E^a_1 \cap E^b_2$, graph \Gr{} has an edge $(w(e),z(e))$ and an edge $(z'(u),z'(e))$;
if $e \in E^b_1 \cap E^a_2$, graph \Gr{} has an edge $(w'(e),x'(e))$ and an edge $(x(v),x(e))$;
if $e\in E^a_1 \cap E^a_2$, graph \Gr{} has an edge $(z'(u),x(v))$.
For each vertex $u \in V_1$, if $u$ is incident to at least one edge in $E^b_1$, then \Gr{} contains edge $(w(u),z(u))$ and edge $(r_1(u),r_2(u))$, otherwise it contains edge $(w(u),w_1)$.
For each vertex $v \in V_2$, if $v$ is incident to at least one edge in $E^b_2$, then \Gr{} contains edge $(w'(v),x'(v))$, otherwise it contains edge $(w'(v),w_4)$.

Graph \Gb{} contains \Gint plus the following edges. Consider each edge $e = (u,v)$ with $u \in V_1$ and $v \in V_2$.
If $e \in E^a_1 \cap E^b_2$, graph \Gb{} has an edge $(w(e),z'(e))$, an edge $(z(e),r_1(e))$, and an edge $(r_2(e),x'(v))$.
If $e \in E^a_2 \cap E^b_1$, graph \Gb{} has an edge $(r_2(u),x'(e))$ and an edge $(w'(e),x(e))$.
If $e \in E^b_1 \cap E^b_2$, graph \Gr{} has an edge $(r_2(u),x'(v))$.
For each vertex $u \in V_1$, if $u$ is incident to at least one edge in $E^a_1$, then \Gb{} contains edge $(w(u),z'(u))$, otherwise it contains edge $(w(u),w_2)$. Also, if $u$ is incident to at least one edge in $E^b_1$, then \Gb{} contains edge $(w(u),r_1(u))$.
For each vertex $v \in V_2$, if $v$ is incident to at least one edge in $E^a_2$, then \Gb{} contains edge $(w'(v),x(v))$, otherwise it contains edge $(w'(v),w_3)$.

The order of the leaves of $Q_1$ represents the order in which the vertices of $s_1$ and the crossings between edges of $E^a_2 \cap E^b_1$ appear along $\ell$; analogously, the order of the leaves of $Q_2$ represents the order in which the vertices of $s_2$ and the crossings between edges of $E^a_1 \cap E^b_2$ appear along $\ell$. Hence, concatenating these two orders yields a total order of the vertices and of the crossings along $\ell$. In the following we prove that this correspondence determines a {\sc b2pbesc}, when starting from a \sefe, and vice versa.

Suppose that \sefeinstance{} admits a \sefe. Let $\mathcal{O}_1$ and $\mathcal{O}_2$ be the orders in which the leaves of $Q_1$ and of $Q_2$ appear around $q_1$ and $q_2$, respectively.
We obtain a {\sc b2pbesc} $\mathcal{B}$ by concatenating $\mathcal{O}_1$ and $\mathcal{O}_2$, where each vertex $u_i$ of $V_1$ corresponds to leaf $w(u_i)$ of $Q_1$, each vertex $v_j$ of $V_2$ corresponds to leaf $w'(v_j)$ of $Q_2$, and the crossing between an edge $e$ and the spine $\ell$ of $\mathcal{B}$ corresponds to either leaf $w(e)$ of $Q_1$ or leaf $w'(e)$ of $Q_2$, depending on whether $e \in E^b_1 \cap E^a_2$ or $e \in E^a_1 \cap E^b_2$.

Since the leaves of $Q_1$ corresponding to vertices of $V_1$ are all connected to vertex $w_1$ by paths belonging to the same graph \Gr{}, they appear consecutively around $q_1$, and hence the corresponding vertices of $V_1$ appear consecutively along $\ell$. Analogously, all the vertices of $V_2$ appear consecutively along $\ell$, as the corresponding leaves of $Q_2$ are all connected to $w_4$. These two facts imply that all the edges cross the spine in a point lying between a vertex of $V_1$ and a vertex of $V_2$. Hence, the order of the vertices along the spine $\ell$ of $\cal B$ is consistent with a valid \bTWObesc. We now show that this order also allows drawing the edges without crossings.

The routing of each edge $e = (u_i,v_j)$ is performed as follows. If $e \in E^a_1 \cap E^a_2$, then $e$ is drawn as a curve on the internal side of $\ell$.
If $e \in E^b_1 \cap E^b_2$, then $e$ is drawn as a curve on the external side of $\ell$. If $e \in E^a_1 \cap E^b_2$, then $e$ is drawn as a curve whose portion between $u_i$ and the crossing point $y_e$ is on the internal side of $\ell$, and whose portion between $y_e$ and $v_j$ is on the external side of $\ell$. If $e \in E^b_1 \cap E^a_2$, then $e$ is drawn as a curve whose portion between $u_i$ and $y_e$ is on the external side of $\ell$, and whose portion between $y_e$ and $v_j$ is on the internal side of $\ell$.

First observe that, because of the edges connecting $Q_1$ with $T_1$ and $T_2$, the clockwise order of the leaves of $Q_1$ coincides with the counterclockwise order of the leaves of $T_1$ and $T_2$, when restricting these orders to the leaves corresponding to the same vertices and edges. The same holds for $Q_2$ with respect to $T_3$ and $T_4$. Also, because of the edges connecting $R_1$ with $T_1$ and $R_2$, the clockwise order of the leaves of $R_2$ is the same as the one of the leaves of $T_1$.

We now prove that no two edges in the constructed book embedding cross each other. Observe that each edge either entirely lies in one of the two sides of $\ell$, or it crosses $\ell$ once, hence it is composed of two portions in different sides of $\ell$. Clearly, it suffices to prove that no two edges (or portions) on the same side of $\ell$ cross each other.

Consider the portions of the edges of \bookinstance{} that lie on the same side of $\ell$, say on the internal side of $\ell$: these are edges $e=(u_i,v_j)$ in $E^a_1 \cap E^a_2$, the portions of the edges $e = (u_m,v_h)$ in $E^a_1 \cap E^b_2$ between $u_m$ and $y_e$, and the portions of the edges $e = (u_l,v_p)$ in $E^b_1 \cap E^a_2$ between $y_e$ and $v_p$. Every edge of the first type corresponds to an edge $(z'(u_i),x(v_j))$ of \Gr{}; every portion of an edge of the second type corresponds to an edge $(z'(u_m),z'(e))$ of \Gr{}; and every portion of an edge of the third type corresponds to an edge $(x(e),x(v_p))$ of \Gr{}.
Since all these edges belong to \Gr{}, they do not cross in the given \sefe.
By construction, for each vertex of $G$ incident to at least one of the considered edges, there is a corresponding leaf of either $T_2$ or $T_3$ in \Gint; also, for each of these edges that crosses $\ell$, there is a corresponding leaf of either $T_2$ or $T_3$ in \Gint. Further, these leaves appear, in the left-to-right order of the leaves of $T_2$ and $T_3$, in the same order as they appear along $\ell$, by the previous observation that the order of the leaves of $T_2$ (of $T_3$) coincides with the one of $Q_1$ (of $Q_2$, resp.). This implies that the considered edges of \bookinstance{} do not cross in the internal side of $\ell$.
Analogous arguments can be used to prove that the portions of the edges of \bookinstance{} lying in the external side of $\ell$ do not cross each other. In this case, the edges of the \sefe that have to be considered are $(r_2(u_i),x'(v_j))$, $(r_2(u_i),x'(e))$, and $(r_2(e),x'(v_j))$, which all belong to \Gb{}.

We now prove the opposite direction. Suppose that \bookinstance{} admits a {\sc b2pbesc} $\mathcal{B}$. We construct a \sefe of \sefeinstance{} as follows.

We define a linear ordering $\sigma_1$ of the vertices of $V_1$ and the crossings between $\ell$ and the edges in $E^a_1 \cap E^b_2$; $\sigma_1$ is the clockwise order in which such vertices and crossings appear along $\ell$ starting at any vertex of $V_2$. Observe that, since $\mathcal{B}$ is a {\sc b2pbesc}, all the vertices in $V_1$ appear consecutively in $\sigma_1$. Analogously, we define a linear ordering $\sigma_2$ of the vertices of $V_2$ and the crossings between $\ell$ and the edges in $E^a_2 \cap E^b_1$; $\sigma_2$  is the clockwise order in which such vertices and crossings appear along $\ell$ starting at any vertex of $V_1$. Observe that, since $\mathcal{B}$ is a {\sc b2pbesc}, all the vertices in $V_2$ appear consecutively in $\sigma_2$.

Recall that each leaf of $Q_1$ either corresponds to a vertex in $V_1$ or to an edge in $E^a_1 \cap E^b_2$. Thus, we can define a clockwise linear ordering of the leaves of $Q_1$ around $q_1$ as the corresponding vertices and crossings appear in $\sigma_1$; this linear ordering starts after edge $(q_1,t_1)$. A clockwise linear ordering of the leaves of $Q_2$ around $q_2$ is defined analogously from $\sigma_2$ starting from edge $(q_2,q_1)$. 

The order of the leaves of trees $T_i$, with $i=1,2,3,4$, and of stars $R_1$ and $R_2$ is also decided based on $\sigma_1$ and $\sigma_2$. This allows us to draw all the edges of \Gr{} and \Gb{} that are incident to stars $Q_1$, $Q_2$, $R_1$, and $R_2$ without crossings; in particular, the paths connecting $q_1$ to $w_1$ and $w_2$, and those connecting $q_2$ to $w_3$ and $w_4$ can be drawn without crossings since the vertices of $V_1$ (and the vertices of $V_2$) are consecutive along $\ell$ in $\mathcal{B}$. The fact that the edges connecting the leaves of $T_2$ and $T_3$, and the edges connecting the leaves of $R_2$ and $T_4$ can be drawn without crossings is again due to the fact that, by construction, these edges correspond to portions of edges of $G$ lying on the same side of $\ell$.

Graph \Gint is connected, by construction. The fact that \Gr{} and \Gb{} are $2$-connected can be proved as follows. Graph \Gr{} is composed of the outer cycle $\cal C$, plus a set of $2$-connected components connecting pairs of vertices of $\cal C$. One component connects $q_1$ to $t_1$; one component connects $r_1$ to $r_2$; one connects $q_2$ to $t_4$; and another one connects $t_2$ to $t_3$. In particular, in order for this latter component to actually exist, at least one edge in $E^a_1 \cap E^a_2$ must exist. However, this fact can be assumed without loss of generality, as otherwise two dummy vertices and an edge in $E^a_1 \cap E^a_2$ between them could be added to the instance without altering the possibility of finding a {\sc b2pbesc}. As for \Gb{}, it is also composed of the outer cycle $\cal C$, plus a set of $2$-connected components connecting pairs of vertices of $\cal C$. One component connects $q_1$ to $t_2$; one connects $t_1$ to $r_1$; one connects $r_2$ to $t_4$; and another one connects $q_2$ to $t_3$.

As \sefeinstance{} can be easily constructed in linear time, the lemma follows.
\end{proof}


\begin{theorem}\label{th:partitioned2clique-polynomial}
Problem {\sc $2$-Partitioned Clique Planarity} can be solved in quadratic time for instances $(G,S)$ in which $|S|=2$.
\end{theorem}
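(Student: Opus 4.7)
The plan is to combine Lemma~\ref{le:2cliquetoSEFE} with a known tractability result for \sefe in a restricted setting. Given the instance $(G,S)$ with $|S|=2$ and the edge partitions $\{E^a_1,E^b_1,E^a_2,E^b_2\}$, I would first apply Lemma~\ref{le:2cliquetoSEFE} to obtain, in linear time, an equivalent instance \sefeinstance{} of \sefe whose graphs $\red{G_1}$ and $\blue{G_2}$ are both $2$-connected and whose common graph $\Gint$ is connected. By the equivalence established in that lemma, $(G,S)$ with the given side partitions is $2$-partitioned clique-planar if and only if \sefeinstance{} admits a simultaneous embedding with fixed edges.

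Next, I would invoke the known algorithmic result for \sefe in precisely this structural regime: when both input graphs are $2$-connected and their common subgraph is connected, \sefe can be decided in quadratic time (this is the Bl\"asius--Rutter style result for \sefe with connected common graph and $2$-connected input graphs, which I would cite explicitly in the final write-up). The idea behind that algorithm is that the connectivity hypotheses force the SPQR-trees of $\red{G_1}$ and $\blue{G_2}$ to induce compatible rotation constraints at the common vertices, which can be checked by a careful merge of PQ-trees; under these hypotheses the decision reduces to a system of consistency constraints solvable in $O(n^2)$ time.

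Combining the two steps yields the theorem: the reduction costs linear time, the \sefe test costs quadratic time, and the size of the \sefe instance is linear in the size of $(G,S)$ because every vertex and every link-edge of $G$ contributes only a constant number of vertices and edges to $\Gint$, $\red{G_1}$, and $\blue{G_2}$ (as inspection of the construction in the proof of Lemma~\ref{le:2cliquetoSEFE} shows). Hence the overall running time is $O(n^2)$, where $n$ is the size of $(G,S)$.

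The main obstacle, and the only non-routine point, is ensuring that the reduction of Lemma~\ref{le:2cliquetoSEFE} actually lands in the tractable regime of \sefe; but that lemma is stated precisely so that $\red{G_1}$, $\blue{G_2}$ are $2$-connected and $\Gint$ is connected, exactly matching the hypotheses of the quadratic-time \sefe algorithm, so no further work is needed beyond citing it.
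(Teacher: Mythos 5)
Your proposal is correct and follows exactly the paper's own argument: apply Lemma~\ref{le:2cliquetoSEFE} to obtain in linear time an equivalent \sefe instance with \Gr{} and \Gb{} biconnected and \Gint connected, then invoke the known quadratic-time algorithm for \sefe under precisely these connectivity hypotheses. The additional remarks on the linear size of the constructed instance are consistent with the construction and require no further justification.
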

\begin{proof}
Apply Lemma~\ref{le:2cliquetoSEFE} to construct in linear time an instance \sefeinstance{} of \sefe that is equivalent to $(G,S)$ such that \Gr{} and \Gb{} are biconnected and their intersection graph \Gint is connected. The statement follows from the fact that instances of \sefe with this property can be solved in quadratic time~\cite{br-spacep-13}.
\end{proof}

\section{Clique Planarity with Given Hierarchy}\label{se:clique-hierarchy}

In this section we study a version of the {\sc Clique Planarity} problem in which the cliques are given together with a hierarchical relationship among them.
Namely, let $(G,S)$ be an instance of {\sc Clique Planarity} and let $\psi: S \rightarrow \{1,\dots,k\}$, with $k \leq |S|$, be an assignment of the cliques in $S$ to $k$ levels such that, for each link-edge $(u,v)$ of $G$ connecting a vertex $u$ of a clique $s'$ to a vertex $v$ of a clique $s''$, we have $\psi(s') \neq \psi(s'')$; an instance is {\em proper} if $\psi(s') = \psi(s'') \pm 1$ for each link-edge.

We aim to construct canonical clique-planar representations of $(G,S)$ such that:

\begin{itemize}
\item (Property~1) for each clique $s \in S$, the top side of the bounding box of the representation of $s$ lies on the line $y=2\psi(s)$, while the bottom side lies above the line $y=2\psi(s)-2$; and 
\item (Property~2) each link-edge $(u,v)$, with $u \in s'$, $v \in s''$, $\psi(s') < \psi(s'')$, is drawn as a $y$-monotone curve from the top side of $R(u)$ to the bottom side of $R(v)$.
\end{itemize}

We call the problem of testing whether such a representation exists {\sc Level Clique Planarity}.

We show how to test level clique planarity in quadratic time for proper instances via a linear-time reduction to equivalent proper instances of {\sc T-level Planarity}~\cite{tibp-addfr-15}.

A \emph{$\mathcal{T}$-level graph} $(V,E,\gamma,\mathcal{T})$ consists of: 
\begin{enumerate}[(i)]
\item a graph $G =(V,E)$;
\item a function $\gamma: V \rightarrow \{1,...,k\}$ such that $\gamma(u) \neq \gamma(v)$ for each $(u,v) \in E$, where the set $V_i = \{v \mid \gamma(v)=i\}$ is the $i$-th \emph{level} of $G$; and 
\item a set $\mathcal{T}=\{T_1,\dots,T_k\}$ of rooted trees such that the leaves of $T_i$ are the vertices in $V_i$.
\end{enumerate}

A \emph{$\mathcal{T}$-level planar drawing} of $(V,E,\gamma,\mathcal{T})$ is a planar drawing of $G$ where the edges are $y$-monotone curves and the vertices in $V_i$ are placed along the line $y = i$, denoted by $\ell_i$, according to an order \emph{compatible} with $T_i$; that is, for each internal node $\mu$ of $T_i$, the leaves of the subtree of $T_i$ rooted at $\mu$ are consecutive along $\ell_i$. A $\mathcal{T}$-level graph is {\em $\mathcal{T}$-level planar} if it admits a $\mathcal{T}$-level planar drawing. The \tlp problem asks to test whether a $\mathcal{T}$-level graph is $\mathcal{T}$-level planar. We have the following.

\begin{lemma}\label{le:hierarchical-2-cyclicTlevel}
Given a proper instance of {\sc Level Clique Planarity}, an equivalent proper instance of {\sc T-level Planarity} can be constructed in linear time.
\end{lemma}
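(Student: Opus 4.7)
The plan is to set up a bijection between level clique-planar representations of $(G,S,\psi)$ and $\mathcal{T}$-level planar drawings of an explicitly constructed instance $(V,E,\gamma,\mathcal{T})$. I would take $V$ to be the vertex set of $G$, $E$ to be its set of link-edges, and $\gamma(v)=\psi(s)$ whenever the vertex $v$ lies in the clique $s\in S$; since the given instance of {\sc Level Clique Planarity} is proper, every edge of $E$ has endpoints at $\gamma$-values differing by one, so the resulting instance of {\sc $T$-Level Planarity} is also proper. For each level $i$, I would define $T_i$ as the depth-two tree whose root has one child $c_s$ for each clique $s$ with $\psi(s)=i$, and whose grandchildren under $c_s$ are the vertices of $s$; an ordering of the vertices of level $i$ is compatible with $T_i$ exactly when the vertices of each clique at level $i$ form a contiguous block. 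The entire construction is a single linear-time pass over $(G,S,\psi)$.

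For the forward direction, I would start from a level clique-planar representation $\Gamma$ of $(G,S,\psi)$, which by Lemma~\ref{le:canonical-clique-planar} may be assumed canonical, so the squares of every clique $s$ lie on a common slope-$1$ diagonal. A direct inspection of such a diagonal shows that both the top and the bottom side of the bounding box of $s$ meet the squares of $s$ in the \emph{same} left-to-right permutation. Reading off, for each level $i$, the order in which the visible top portions of the rectangles of level $i$ appear along the $x$-axis thus yields a linear order compatible with $T_i$. I would place the vertices of level $i$ on $\ell_i$ in this order and redraw each link-edge as a $y$-monotone curve tracking its image in $\Gamma$: monotonicity is inherited from Property~2 and planarity from the clique-planarity of $\Gamma$.

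For the reverse direction, I would begin with a $\mathcal{T}$-level planar drawing $\Lambda$ of $(V,E,\gamma,\mathcal{T})$ and, for each clique $s$ with $\psi(s)=i$, read off the order $u_1^s,\dots,u_{|s|}^s$ in which its vertices appear on $\ell_i$; by $T_i$-compatibility these vertices form a contiguous block. Using Lemma~\ref{le:canonical-clique-planar}, I would realize $s$ as a slope-$1$ diagonal of unit squares in this order, shifted vertically so that the top of its bounding box lies on $y=2i$ and the bottom above $y=2i-2$, thereby enforcing Property~1. Finally, I would detach each link-edge from its vertex endpoints in $\Lambda$ and reattach it, inside a narrow tubular neighbourhood of its image, to the appropriate visible top and visible bottom portions of the corresponding squares, producing a $y$-monotone curve in accordance with Property~2.

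The main difficulty lies in this last step: checking that no crossings are created by the rerouting, even though each T-level vertex is expanded into a whole diagonal of rectangles and each link-edge is forced to land on a specific side of a specific rectangle. The bookkeeping is made possible by the coincidence of the top- and bottom-side orderings on a canonical diagonal observed in the forward direction: this guarantees that the order in which the link-edges incident to $s$ meet the top (respectively, the bottom) of its bounding box agrees with their order along $\ell_{\psi(s)}$ in $\Lambda$. Combined with the $y$-monotonicity of the edges of $\Lambda$ and a local planar homeomorphism around each clique, in the same spirit as the homeomorphism arguments in the proofs of Lemmas~\ref{le:reduction-cpl} and~\ref{le:pep-correspondence}, this lets the rerouting be carried out planarly.
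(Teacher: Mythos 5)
Your proposal is correct and follows essentially the same route as the paper: the identical $\mathcal{T}$-level instance (link-edges only, levels from $\psi$, depth-two trees grouping each clique's vertices under one internal node), and both directions realized by placing canonical slope-$1$ square arrangements in the horizontal strips and rerouting link-edges in an order-preserving way near their original images. The paper fills in the details you flag as the main difficulty (e.g., reading vertex orders off the line $y=2i-1$, which meets every square of level $i$, and routing the three-segment polylines left to right), but these are exactly the details your sketch anticipates.
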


\begin{proof}
Given $(G(V,E),S,\psi)$, an instance $(V,E',\gamma,\mathcal{T})$ of {\sc T-level Planarity} can be constructed as follows. The vertex sets of the graphs coincide and $E'$ coincides with the set of link-edges in $E$. For each vertex $v$ in a clique $s \in S$ we have $\gamma(v)=\psi(s)$. Finally, for $i=1,\dots,k$, where $k$ is the number of levels in $(G,S,\psi)$, tree $T_i \in \mathcal{T}$ has root $r_i$, a child $w_s$ of $r_i$ for each $s\in S$, and the vertices of $s$ as children of $w_s$. 

\begin{figure}[htb]
  \centering
  \subfigure[]{\includegraphics[page=2,height=.16\textwidth]{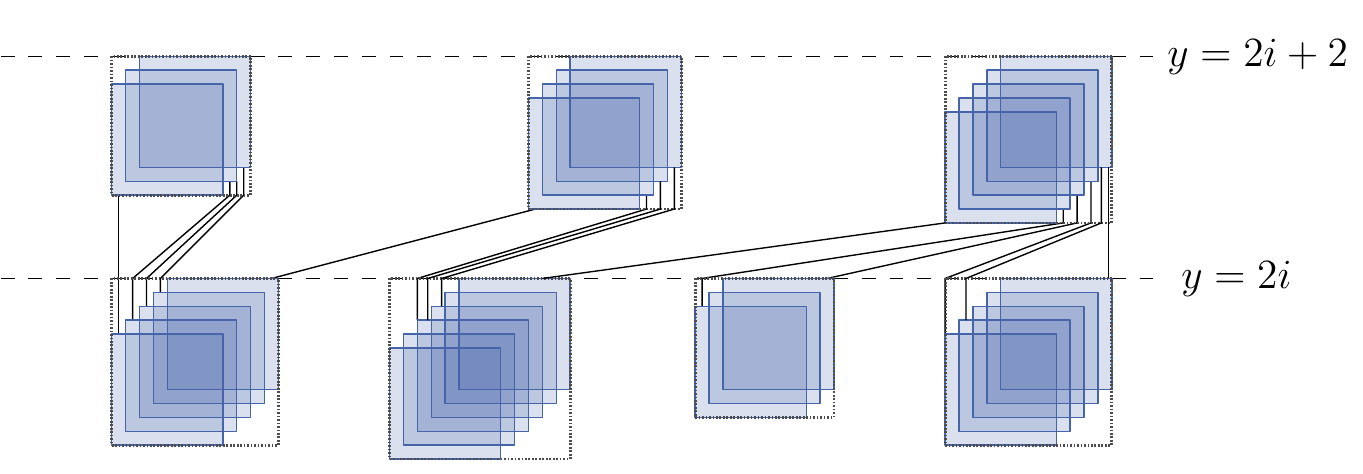}\label{fig:level-clique-a}}\hspace{1cm}
  \subfigure[]{\includegraphics[page=1,height=.16\textwidth]{LevelCliquePlanarity}\label{fig:level-clique-b}}
  \caption{Construction of a clique-planar representation of $(G(V,E),S,\psi)$ from a $T$-level planar drawing $\Gamma$ of $(V,E',\gamma,\mathcal{T})$. (a) The part of $\Gamma$ between two levels $i$ and $i+1$. The edges and the internal nodes of trees $T_i$ and $T_{i+1}$ are green, while the vertices in $V$ and the edges in $E'$ are black. (b) The corresponding clique-planar representation between levels $i$ and $i+1$. The bounding box of the representation of each clique is dotted.}
  \label{fig:level-clique}
\end{figure}

Suppose that $(V,E',\gamma,\mathcal{T})$ admits a $T$-level planar drawing $\Gamma$, as in Fig.~\ref{fig:level-clique-a}. We construct a clique-planar representation with the desired properties as follows; refer to Fig.~\ref{fig:level-clique-b}. For each clique $s \in S$ with $\psi(s)=i$, we construct a canonical representation of $s$ in a bounding box of size $(1+\varepsilon)\times(1+\varepsilon)$, with $0<\varepsilon<1$, and plug it between lines  $y=2\psi(s)$ and $y=2\psi(s)-2$ with the top side of the bounding box lying on line $y=2\psi(s)$. Note that the bottom side of the bounding box is above the line $y=2\psi(s)-2$. Cliques on the same level are placed side-by-side, so that they do not touch each other. Finally, for each two consecutive levels $V_i$ and $V_{i+1}$, consider the edges in $E'$ connecting a vertex $u \in V_i$ with a vertex $v \in V_{i+1}$ as they appear in $\Gamma$ from left to right; we draw the corresponding link-edge $(u,v) \in E$ as a polyline, lying to the right of any previously drawn edge between $V_i$ and $V_{i+1}$, composed of three segments: the first is a vertical segment connecting a point on the top side of $R(u)$ with a point $p_u$ on the top side of the bounding box of $s'$, the third is a vertical segment connecting a point on the bottom side of $R(v)$ with a point $p_v$ on the bottom side of the bounding box of $s''$, the middle one is a straight-line segment connecting $p_u$ with $p_v$.

The obtained representation satisfies Properties~1 and~2 by construction.  That the link-edges do not cross each other descends from the following two facts. First, the node $w_s$ in $T_{\psi(s)}$ enforces vertices of the same clique $s$ to be consecutive along $\ell_{\psi(s)}$ in $\Gamma$. Second, for each clique $s \in S$, the squares representing the vertices of $s$ are in the same order as the corresponding vertices of $s$ along $\ell_{\psi(s)}$. This implies that, for any two cliques $s' \in V_i$ and $s'' \in V_{i+1}$, the left-to-right order in which the link-edges between $s'$ and $s''$ intersect the line $y=2\psi(s')$ is the same as the one in which they intersect the line $y=2(\psi(s''))-(1+\varepsilon)$, hence no two such link-edges cross.

Suppose that $(G(V,E),S,\psi)$ admits a clique-planar representation satisfying Properties~1 and~2. We construct a $T$-level planar drawing $\Gamma$ of $(V,E',\gamma,\mathcal{T})$ as follows. For $i=1,\dots,k$, consider a line $\ell_i$ defined as $y=2i-1$. Place each vertex $v \in V_i$, at the intersection between $\ell_i$ and the left side of $R(v)$. Note that such an intersection exists since the clique-planar representation of $(G(V,E),S,\psi)$ satisfies Property~1.
Draw each edge $(u,v)$ with $u \in V_i$ and $v \in V_{i+1}$ as a curve composed of three parts: The middle part coincides with the drawing of the corresponding link-edge, which connects a point $p_u$ on the top side of $R(u)$ with a point $p_v$ on the bottom side of $R(v)$; the first part is a curve connecting $u$ with $p_u$ entirely contained inside $R(u)$ not crossing any other edge (this can be done by routing the curve first following the left side of $R(u)$ and then following the top side of $R(u)$); and the last part is a curve connecting $v$ with $p_v$ entirely contained inside $R(v)$ not crossing any other edge (this can be done by routing the curve first following the left side of $R(v)$ and then following the bottom side of $R(v)$).

We show that $\Gamma$ is a $T$-level planar drawing of $(V,E',\gamma,\mathcal{T})$. No two edges cross in $\Gamma$ since the middle parts of the edges in $E'$ have the same drawing as the link-edges in $E$, which do not cross by hypothesis, while the first and the last parts do not cross by construction. Finally, the fact that the ordering of the vertices of $V_i$ along $\ell_i$ is compatible with $T_i$ descends from the fact that $\ell_i$ intersects all the rectangles of each clique $s$ with $\psi(s)=i$ and that no two rectangles representing vertices belonging to different cliques overlap. Hence, vertices belonging to the same clique, and hence children of the same internal node of $T_i$, are consecutive along $\ell_i$.

The construction can be performed in linear time, thus proving the lemma.
\end{proof}

We thus get the main result of this section. 

\begin{theorem}
{\sc Level Clique Planarity} is solvable in quadratic time for proper instances and in quartic time for general instances.
\end{theorem}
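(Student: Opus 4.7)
The plan is to handle the proper case directly by the reduction already developed, and then bootstrap from it to the general case via the standard trick of subdividing long link-edges.

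For proper instances, I would simply apply Lemma~\ref{le:hierarchical-2-cyclicTlevel} to produce, in linear time, an equivalent proper instance of \tlp of the same asymptotic size. The desired running time then follows from the fact, established in~\cite{tibp-addfr-15}, that \tlp can be decided in quadratic time for proper instances. It is worth noting that the instance produced by Lemma~\ref{le:hierarchical-2-cyclicTlevel} is indeed proper, since each link-edge of $(G,S,\psi)$ connects vertices of cliques on consecutive levels, and link-edges are precisely the edges of the constructed $\mathcal{T}$-level graph.

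For a general instance $(G,S,\psi)$ that is not proper, I would first reduce to a proper instance by subdividing link-edges that skip levels. More precisely, for every link-edge $(u,v)$ with $u\in s'$, $v\in s''$, and $\psi(s')<\psi(s'')$, I would replace $(u,v)$ with a path $u = x_{\psi(s')},x_{\psi(s')+1},\dots,x_{\psi(s'')} = v$, assigning each new dummy vertex $x_i$ to a fresh singleton clique at level $i$. The resulting instance $(G'',S'',\psi'')$ is proper by construction. The correspondence between clique-planar representations of the two instances satisfying Properties~1 and~2 is straightforward: given a representation of $(G,S,\psi)$ satisfying the properties, the $y$-monotonicity of each link-edge guarantees a unique intersection with every horizontal line $y=2i-1$ strictly between its endpoints, and placing (tiny translates of) the rectangle $\mathcal{R}$ at these intersections yields a representation of $(G'',S'',\psi'')$; the converse is obtained by contracting the subdivided paths. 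Since each link-edge is replaced by at most $k = O(|S|)$ edges and $k\le|V|$, the new instance has size $O(n^2)$ in the worst case, and the quadratic bound for proper instances then gives a quartic bound overall.

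The only delicate step is verifying that the subdivision really preserves the existence of a representation meeting Properties~1 and~2. The main worry is that a subdivided link-edge might want to cross an intermediate level more than once, so a single dummy vertex on that level would not suffice; however, this is ruled out by Property~2, which forces each link-edge to be $y$-monotone and hence to meet every intermediate horizontal line in exactly one point. Once this observation is in hand, the reduction is routine and the bound follows.
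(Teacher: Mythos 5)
Your proposal follows the paper's own argument essentially verbatim: for proper instances, invoke Lemma~\ref{le:hierarchical-2-cyclicTlevel} and the quadratic-time algorithm for proper \tlp instances; for general instances, subdivide level-spanning link-edges with dummy singleton cliques, incurring a quadratic blow-up and hence a quartic bound. Your additional justification that $y$-monotonicity (Property~2) guarantees a unique crossing with each intermediate level is a correct and welcome elaboration of the paper's one-line claim that the subdivision ``does not alter the level clique planarity of the instance.''
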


\begin{proof}
Any instance $(G,S,\psi)$ of {\sc Level Clique Planarity} can be made proper by introducing dummy cliques composed of single vertices to split link-edges spanning more than one level. This does not alter the level clique planarity of the instance and might introduce a quadratic number of vertices.
Lemma~\ref{le:hierarchical-2-cyclicTlevel} constructs in linear time an equivalent proper instance of {\sc T-level Planarity}. The statement follows since {\sc T-level Planarity}  can be solved in quadratic time~\cite{tibp-addfr-15} for proper instances.
\end{proof}

\section{Conclusions and Open Problems}\label{se:conclusions}

We initiated the study of hybrid representations of graphs in which vertices are geometric objects and edges are either represented by intersections (if part of dense subgraphs) or by curves (otherwise). Several intriguing questions arise from our research. 

\begin{enumerate}
\item How about considering families of dense graphs richer than cliques? Other natural families of dense graphs could be considered, say interval graphs, complete bipartite graphs, or triangle-free graphs. 
\item How about using different geometric objects for representing vertices? Even simple objects like equilateral triangles or unit circles seem to pose great challenges, as they give rise to arrangements with a complex combinatorial structure. For example, we have no counterpart of Lemma~\ref{le:squares-arrangment} in those cases. \item What is the complexity of the bipartite 2-page book embedding problem? We remark that, in the version in which spine crossings are allowed, this problem is equivalent to the clique planarity problem for instances with two cliques.
\end{enumerate}

\bibliographystyle{splncs03}
\bibliography{bibliography}

\end{document}